\newtheorem{lemma}{Lemma}
\newtheorem{hypothesis}{Hypothesis}
\newtheorem{remark}{Remark}
\newcommand{\be}{\begin{eqnarray}}
\newcommand{\ee}{\end{eqnarray}}
\newcommand{\bee}{\begin{eqnarray*}}
\newcommand{\eee}{\end{eqnarray*}}
\newcommand{\R}{{\mathbb R}}
\newcommand{\Z}{{\mathbb Z}}
\newcommand{\x}{\mbox {\rm x}}
\newcommand{\y}{\mbox {\rm y}}
\newcommand{\I}{\mbox {\sc 1}}
\newcommand{\asy}{{\mathcal O}}
\newcommand{\da}{d_A}
\begin{document}

\title [] {Stationary solutions for multi-dimensional Gross-Pitaevskii equation with double-well potential}

\author {Andrea SACCHETTI}

\address {Department of Physics, Computer Sciences and Mathematics, University of Modena e Reggio Emilia, Modena, Italy.}

\email {andrea.sacchetti@unimore.it}

\date {\today}

\begin {abstract} In this paper we consider a non-linear Schr\"odinger equation with a cubic nonlinearity and a multi-dimensional double well potential. \ In the semiclassical limit the problem of the existence of stationary solutions simply reduces to the analysis of a finite dimensional Hamiltonian system which exhibits different behavior depending on the dimension. \ In particular, in dimension 1 the symmetric stationary solution shows a standard pitchfork bifurcation effect, while in dimension 2 and 3 new asymmetrical solutions associated to saddle points occur. \ These last solutions are localized on a single well and this fact is related to the phase transition effect observed in Bose-Einstein condensates in periodical lattices.

\bigskip

{\it Ams classification (MSC 2010):} 35Q55; 81Q20 

\bigskip

{\it Keywords:} Nonlinear Schr\"odinger Schr\"odinger equations; Semiclassical approximation; 
Bose-Einstein condensates in lattices.

\end {abstract}

\maketitle

\section {Introduction}

Atomic Bose-Einstein condensates (BECs) are described by means of nonlinear Schr\"odinger equations where the nonlinear term of the form $ |\psi|^{2\sigma} \psi $, 
$\sigma=1,2,\ldots $, represents the $(\sigma+1)$-body contact potential \cite {Kohler}, where $\psi$ is the condensate's wavefunction. \ In fact, BECs strongly depend by interatomic forces and the binary coupling term $ |\psi|^2 \psi$ usually represents the dominant nonlinear term, the nonlinear Schr\"odinger equation obtained for $\sigma=1$ takes the usual form of the well-known Gross-Pitaevskii equation \cite {PitStr}.  

The analysis of the time-dynamic of BECs is, in general, an open problem and few rigorous results has been given. \ Among the basic models for BECs the model with a symmetric external \emph {double-well} potential plays an important role. \ Indeed, the explanation of some basic properties in such a relatively simple model will enable us to understand the fundamental mechanisms for a large family of BECs. \ For instance, the \emph {phase transition phenomenon} we can observe for BECs in a periodic lattice can be explained as a result of the bifurcation effects we can already see in the relatively simple double well model. \ In particular, for BECs in a periodic lattice has been seen a transition from the \emph {superfluidity phase} to the \emph {Mott-insulator phase} when the effective nonlinearity parameter becomes larger than a critical value \cite {B1,F,G}. \ In particular, it turns out that such a transition is quite slow in one-dimensional lattice, while it becomes very sharp in three-dimensional lattices. 

We consider here the case where the external potential of the linear part of the Schro\"odinger equation has a \emph {double well} shape. \ 
If the nonlinear term is absent then the linear Schr\"odinger equation has symmetric and antysimmetric eigenstates. \ However, the introduction of a nonlinear term,  which usually models in quantum mechanics an interacting many-particle system, may give rise to asymmetrical states related to spontaneous symmetry breaking phenomenon. \ It has been already proved that for one-dimensional nonlinear Schr\"odinger equations with double-well potentials  (see \cite {FS,Sacchetti2} for the result obtained in the semiclassical limit, see also \cite {KKSW} in the limit of large barrier between the two wells) then the symmetric/antisymmetric stable stationary state bifurcates when the adimensional effective nonlinear parameter takes absolute value equal to a critical value.

In this paper we explore in detail the different pictures may occur in dimension 1, 2 and 3 for the stationary solutions of the Gross-Pitaevskii equation associated to the linear ground state. \ In dimension 1 the only situation can occur is the bifurcation of the symmetric stationary solution, where a branch of asymmetrical solutions appears and where these asymmetrical solutions are going to be gradually localized on a single well when the nonlinearity strength increases. \ In dimension 2 we have different kind of bifurcations.  \ One kind of bifurcations is similar to the one already seen in dimension 1. \ Furthermore, new families of bifurcations appear, they are associated to the spontaneous symmetry breaking effect where a saddle point appears for a critical value of the nonlinearity strength. \ The stationary solutions on the branches raising from such a saddle point have the following peculiarity: they are mostly localized on just one well and thus the localization effect suddenly occurs when the nonlinearity parameter is around its critical value. \ A similar picture, with a more intricate sequence of bifurcations, occurs in dimension 3, too. 

The different behavior between models in dimension 1 and in dimensions higher than 1 has an important physical consequences when we consider BECs in lattices. \ Indeed has been oberved that for BECs in lattices a transitions from a superfluidity phase to a Mott-insulator phase occurs when the nonlinearity strength reaches a critical values; in particular in dimension 1 the transition is smooth, while in dimension 3 the transition is sharp. \ In fact, such a different behavior is expected to be connected to the appearance, in dimension 2 and 3, of stationary solutions localized on a single lattice site as we have seen for double-well models.

The paper is organized as follows. \ In Section \ref {Sec1} we introduce the model. \ In Section \ref {Sec2} we consider the $N$-mode approximation for nonlinear Schr\"odinger operator with a lattice potential in any dimension $d$. \ In Section \ref {Sec3} we consider in more detail the $N$-mode approximation for double-well potential in any dimension. \ Finally, in Section \ref {Sec4} we numerically compute the stationary solutions of the $N$-mode approximation in dimension $d=1$, $d=2$ and $d=3$ associated to the linear ground state.

\section {Double-well model} \label {Sec1}

Here, we consider the nonlinear Schr\"odinger (hereafter NLS) equation in the $d$-dimensional space $\x =(x_1, \ldots , x_d) \in \R^d$
\be
\left \{ 
\begin {array}{l}
i \hbar \frac {\partial \psi }{\partial t} = H_0\psi  + \epsilon |\psi |^{2 \sigma} \psi \, ,  \ \psi (\cdot , t) \in L^2 (\mathbb{R}^d , d\x ), \ \| \psi (\cdot , t ) \| =1 \\ 
\psi_0 (\x ) = \psi (\x , 0) 
\end {array}
\right.  ,  
\label {F1}
\ee
where $\epsilon \in \mathbb{R}$ and $\| \cdot \| $ denotes the $L^2 (\mathbb{R}^d , d\x )$ norm; 
\
\be
H_0 = \frac {-\hbar^2}{2m} \Delta + V ,\ \ \Delta = \sum_{j=1}^d \frac {\partial^2}{\partial x_j^2} \label {F2}
\ee
is the linear Hamiltonian with a \emph {lattice potential} $V(x)$. \ In the case of cubic nonlinearity where $\sigma =1$ then (\ref {F1}) is usually called  Gross-Pitaevskii equation. \ For the sake of definiteness we assume the units such that $2m=1$. \ The semiclassical parameter $\hbar >0$ is such that $\hbar \ll 1$.

Let us introduce the assumptions on the lattice potential $V(x)$.

\begin {hypothesis} \label {Hyp1} Let $v(\x ) \in C_0^\infty (\R^d )$ be a spherically symmetric single well potential, that is $v(\x ) = f(|\x |)$ where $f (r)  \in C_0^\infty (\R^+ )$ is a smooth non-positive monotone not-decreasing function with compact support and such that $f(0)<0$. \ In particular we assume that $f'(0+0)=0$ and $f''(0+0)>0$. \ Then $v(\x ) $ is a smooth function with compact support and with a non-degenerate minimum value at $\x =0$:
\be
v (\x ) > v_{min} = f (0) , \ \forall \x \in \R^d  , \ \x \not= 0 \,. \label {F3}
\ee
By construction the support of $v(\x)$ is a $d$-dimensional ball with center at $\x =0$ and radius $a$, for some $a>0$. \ Let $J_m \in \Z$ and $ K_m \in \Z$, $m=1,2,\ldots , d$ be fixed and such that $J_m \le K_m$; let 
\bee
J = \left \{ j =(j_1 , \ldots , j_d ) \in \Z^d \ : \ J_m \le j_m \le K_m , \ m=1,2,\ldots , d \right \} .
\eee
We then define a \emph {lattice potential} as
\be
V(\x )= \sum_{j\in J} v(\x - \x_j) , \ \x_j = jb =(j_1 b, \ldots , j_d b) \, , \label {F4}
\ee
where $b>0$ is such that $b >2a$. \ Hence, by construction, the lattice potential $V(\x)$ has exactly 
\be
N:= \Pi_{m =1}^d [K_m - J_m +1] \label {F4Bis}
\ee
similar wells with non-degenerate minima at $\x = \x_j$, $j\in J$.
\end {hypothesis}

It is a well known fact (\cite {CW} and Thm. 6.2.1 \cite {C}) that the Cauchy problem (\ref {F1}) is globally well-posed for any $\epsilon \in \R$ small enough provided that
\bee
\sigma < \left \{ 
\begin {array}{ll}
+\infty & \ \mbox { if } \ d \le 2 \\ 
\frac {2}{d-2} & \ \mbox { if } \ d > 2 
\end {array}
\right. . 
\eee
In such a case the conservation of the norm of $\psi (\x , t)$ and of the energy
\bee
{\mathcal E} (\psi ) = \langle \psi , H_0 \psi \rangle + \frac {\epsilon}{\sigma +1} \langle \psi^{\sigma+1} ,  \psi^{\sigma +1} \rangle 
\eee
follows.

\section {Reduction to the $N$-mode approximation} \label {Sec2}

Now, making use of the semiclassical analysis \cite {H} we reduce the NLS equation (\ref {F1}) to a $N$-dimensional Hamiltonian system, usually denoted $N$-mode approximation, where $N$ is the total number of lattice sites defined in (\ref {F4Bis}). \ We make use of the ideas already developed in the papers \cite {BS,S} and adapted here to the case of a lattice potential (\ref {F4}). \ Since the reduction method is similar to the one already exploited in \cite {BS,S} then we don't dwell here on the details of the proof of the validity of the reduction to the $N$-mode approximation and simply we state the main results.

\subsection {Semiclassical results}

One of the mail tools in semiclassical analysis is the notion of Agmon distance. \ Let $E\ge v_{min}$ be fixed, then  the Agmon (pseudo-)distance, associated to the energy $E$, between two points $\x$ and $\y$ is defined as 
\bee
\da (\x ,\y ; E) = \inf_\gamma \int_\gamma \sqrt {\left [ V(x) -E \right ]_+} dx 
\eee
where the $\inf$ is taken on the set of all regular paths $\gamma$ connecting the two points $\x$ and $\y$ and where $[V-E]_+ = \frac {|V-E|+V-E}{2}$. \ Let us consider the Agmon distance associated to the ground state energy, since the difference  between the ground state energy and the minimum of the potential is of order $O(\hbar )$ then in the semiclassical limit we can choose $E=v_{min}$; in the following, for the sake of simplicity, let us denote
\bee
\da (\x , \y ):=\da (\x , \y ; v_{min}) \, . 
\eee
We then define the following two quantities
\bee
S_0 = \inf_{j\not= \ell \, , \ j,\ell \in J} \da (\x_j ,\x_\ell )
\eee
and
\bee
S_1 = \inf_{|j - \ell |>1 \, , \ j,\ell \in J} \da (\x_j ,\x_\ell ), \ \mbox { where } \ |j - \ell | = \sum_{m=1}^d |j_m - \ell_m| ;
\eee
then, by construction of the lattice potential $V(\x )$, the following result holds true.

\begin {lemma} \label {Lemma1}
Let $j\in J$ and let $\ell \in J$ be such that $|j-\ell |=1$, then 
\be
S_0 = \da (\x_j , \x_\ell ) = 2 \int_0^{b/2} \sqrt {f(r) - f(0)} dr \label {F5}
\ee
is independent of $j$ and $\ell$, and 
\be
S_0 < S_1 \, . \label {F6}
\ee
\end {lemma}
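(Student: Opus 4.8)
The plan is to reduce everything to the elementary observation that the wells do not overlap. Since $b>2a$, the balls $B(\x_k,a)=\mathrm{supp}\,v(\cdot-\x_k)$ are pairwise disjoint, so at every point at most one translate of $v$ is nonzero; hence $V(\x)=f(d(\x))$, where $d(\x)=\min_k|\x-\x_k|$ is the Euclidean distance to the nearest lattice site. In particular $V(\x)\ge f(0)=v_{min}$ with equality only at the sites, so the positive-part truncation is inactive and the Agmon density is simply $\rho(\x)=\sqrt{f(d(\x))-f(0)}$, which vanishes only at the $\x_k$ (here I use that $f$ is nondecreasing with $f''(0+0)>0$, so $f(r)>f(0)$ for $r>0$). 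Throughout I abbreviate $S:=2\int_0^{b/2}\sqrt{f(r)-f(0)}\,dr$, the quantity appearing in (\ref{F5}), and the goal becomes: nearest-neighbour sites are at Agmon distance exactly $S$, while all other pairs are strictly farther.

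For the upper bound when $|j-\ell|=1$ I would simply take the straight segment $[\x_j,\x_\ell]$, of Euclidean length $b$. Along it the nearest site is always one of the two endpoints, so if $r$ denotes arclength from $\x_j$ then $d=\min(r,b-r)$, and the Agmon length of the segment equals $2\int_0^{b/2}\sqrt{f(r)-f(0)}\,dr=S$. This gives $\da(\x_j,\x_\ell)\le S$.

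The heart of the matter is the matching lower bound, which I would obtain by a coordinate projection. Taking the differing coordinate to be $x_1$, I use the pointwise estimate $d(\x)\ge \mathrm{dist}(x_1,b\Z)$ (the first coordinates of the sites form a subset of $b\Z$) together with monotonicity of $f$ to get $\rho(\x)\ge h(x_1)$, where $h(s)=\sqrt{f(\mathrm{dist}(s,b\Z))-f(0)}$ is $b$-periodic with $\int_{pb}^{(p+1)b}h=S$. Since $ds\ge|dx_1|$ and $h\ge0$, an antiderivative estimate yields, for any path $\gamma$, $\int_\gamma\rho\,ds\ge\int_\gamma h(x_1)\,|dx_1|\ge\bigl|\int_{(\x_j)_1}^{(\x_\ell)_1}h\bigr|=S$. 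Thus $\da(\x_j,\x_\ell)\ge S$, matching the upper bound and proving (\ref{F5}); note the argument depends only on $|j-\ell|=1$, giving the asserted independence of $j,\ell$.

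For (\ref{F6}) I would show that every pair with $|j-\ell|>1$ has Agmon distance strictly above $S$; since $J$ is finite this forces $S_1>S=S_0$. If some coordinate satisfies $|j_m-\ell_m|\ge2$, the same $x_m$-projection now sweeps over $|j_m-\ell_m|$ full periods of $h$, so $\int_\gamma\rho\,ds\ge |j_m-\ell_m|\,S\ge 2S$. The delicate case, which I expect to be the main obstacle, is when two or more coordinates each differ by exactly one, since a single projection then returns only $\ge S$ and cannot see the extra transverse displacement; moreover strictness must survive the passage to the infimum. I would resolve this with a combined projection: if $m\ge2$ coordinates differ, the elementary inequality $ds\ge m^{-1/2}\sum_{i}|dx_i|$ over those coordinates lets me add the $m$ one-dimensional estimates to obtain, for every path, $\int_\gamma\rho\,ds\ge \sqrt{m}\,S\ge\sqrt2\,S>S$. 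Because this bound is uniform in $\gamma$ it passes directly to the infimum, giving $\da(\x_j,\x_\ell)\ge\sqrt2\,S$ and hence the strict separation $S_0<S_1$.
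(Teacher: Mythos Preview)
Your argument is correct and in fact more complete than the paper's own proof. The paper, for (\ref{F5}), simply evaluates the Agmon integral along the straight segment joining $\x_j$ and $\x_\ell$ and asserts that this \emph{equals} $\da(\x_j,\x_\ell)$, without addressing why no competing path can do better; you supply exactly that missing lower bound through the coordinate projection $\rho(\x)\ge h(x_1)$ and the antiderivative estimate. For (\ref{F6}) the paper takes a different, more geometric route: it identifies the Agmon ball $B_{\x_j}(S_0/2)$ with the Euclidean ball of radius $b/2$ and then observes that for $|j-\ell|>1$ these Euclidean balls are disjoint, whence $\da(\x_j,\x_\ell)>S_0$. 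Your projection method instead yields the explicit quantitative bounds $\da(\x_j,\x_\ell)\ge 2S_0$ (when some coordinate jumps by at least $2$) and $\da(\x_j,\x_\ell)\ge\sqrt{2}\,S_0$ (when $m\ge2$ coordinates each jump by $1$), which are uniform in the path and therefore pass directly to the infimum. The paper's ball argument is conceptually clean but rests on the identification of Agmon and Euclidean balls, which itself requires a lower bound on $\da$ of the same flavor you establish; your approach is more self-contained and, as a bonus, essentially contains the proof of Lemma~\ref{Lemma2} as the single-coordinate case.
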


\begin {proof}
If $|j-\ell |=1$ then all the components of $\x_j$ and $\x_\ell$ are equal, but one: e.g. $j_m =\ell_m$ for $m=2,\ldots ,d$ and $j_1-\ell_1 =  1$. \ Then, by construction it turns out that 
\bee
\da (\x_j , \x_\ell ) &=& \int_0^b \sqrt {V[\x_j +r (1,0,\ldots ,0)]-v_{min}} dr \\ 
&=& 2 \int_0^{b/2} \sqrt {v[r (1,0,\ldots ,0)]-v_{min}} dr \\ 
&=& 2 \int_0^{b/2} \sqrt {f(r)-v_{min}} dr 
\eee
proving (\ref {F5}). \ Now, in order to prove (\ref {F6}) let 
\be
B_{\x_j} (S_0/2) = \left \{ \x \in \R^d \ : \ \da (\x_j ,\x ) \le \frac 12 S_0 \right \}\, ; \label {F6Bis}
\ee
then, by construction, it follows that 
\bee
B_{\x_j} (S_0/2) = \left \{ \x \in \R^d \ : \ |\x_j -\x | \le \frac 12 b \right \}
\eee
where $|\x-\x_j|$ denotes here the usual distance in $\R^d$ between two points $\x$ and $\x_j$. \ Hence, if $|j-\ell |>1$ then 
\bee
B_{\x_j}(S_0/2) \cap B_{\x_\ell }(S_0/2) = \emptyset 
\eee
and thus (\ref {F6}) follows.
\end {proof}

\begin {lemma} \label {Lemma2}
Let $j =(j_1,\ldots , j_d)\in \Z^d$, let 
\bee
\lceil j \rceil := \max_{m=1, \ldots ,d} |j_m|\, .
\eee
Then 
\be
\da (\x_j , \x_\ell ) \ge \lceil j-\ell \rceil S_0\, . \label {F7}
\ee
\end {lemma}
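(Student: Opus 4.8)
The plan is to prove the lower bound by exhibiting a single admissible test function, using the elementary variational lower bound for a weighted path metric. Since $b>2a$ the wells do not overlap, at most one well contributes at each point, and one checks $V(\x )\ge v_{min}$ everywhere, so $[V-v_{min}]_+ = V-v_{min}$ and the Agmon weight is simply $\sqrt{V-v_{min}}$. The key principle is: if $u$ is a Lipschitz function on $\R^d$ with $|\nabla u(\x )|\le \sqrt{V(\x )-v_{min}}$ pointwise, then for every regular path $\gamma$ joining $\x_\ell$ to $\x_j$ the chain rule and Cauchy--Schwarz give
\[
\int_\gamma \sqrt{V-v_{min}}\,dx \;\ge\; \left| \int_\gamma \nabla u \cdot d\x \right| \;=\; |u(\x_j)-u(\x_\ell)| ,
\]
and taking the infimum over $\gamma$ yields $\da(\x_j,\x_\ell)\ge |u(\x_j)-u(\x_\ell)|$. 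It therefore suffices to build an admissible $u$ realizing the value $\lceil j-\ell\rceil S_0$.

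First I would reduce the construction to one coordinate. Let $m$ be a coordinate for which $|j_m-\ell_m| = \lceil j-\ell\rceil =: k$; to lighten notation take $m=1$, and since $\da$ is symmetric assume $j_1>\ell_1$. I would seek $u$ of the form $u(\x)=\Phi(x_1)$ depending only on the first coordinate. For such a $u$ the admissibility constraint reduces to $|\Phi'(c)| \le \sqrt{w(c)}$, where $w(c):=\inf_{\{x_1=c\}}[V-v_{min}]$ is the profile of the potential over the hyperplane $\{x_1=c\}$, and the endpoint value is simply $\Phi(j_1 b)-\Phi(\ell_1 b)$.

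The main computation is the evaluation of $w(c)$. Because $b>2a$, each hyperplane $\{x_1=c\}$ meets the wells of at most one layer $x_1=i_1 b$; there each contributing term is $v(\x-\x_i)=f(|\x-\x_i|)$ with $f$ monotone, so the infimum over the free coordinates is attained directly above a well centre and equals $f(\mathrm{dist}(c,b\Z))-f(0)$ (with $f$ extended by $0$ beyond its support). This profile is $b$-periodic and even about each layer, and the decisive fact is that its square root integrates over one period to exactly $S_0$:
\[
\int_{i_1 b}^{(i_1+1)b} \sqrt{w(c)}\,dc \;=\; 2\int_0^{b/2}\sqrt{f(t)-f(0)}\,dt \;=\; S_0 ,
\]
the very quantity appearing in (\ref{F5}). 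I would then set $\Phi(c)=\int_{\ell_1 b}^{c}\sqrt{w(c')}\,dc'$, which is monotone, Lipschitz, and satisfies $|\Phi'|=\sqrt{w}\le\sqrt{V-v_{min}}$ pointwise (so $u$ is admissible); summing the above identity over the $k$ periods between $x_1=\ell_1 b$ and $x_1=j_1 b$ gives $u(\x_j)-u(\x_\ell)=k S_0$. Combined with the variational bound, this yields $\da(\x_j,\x_\ell)\ge k S_0=\lceil j-\ell\rceil S_0$.

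I expect the only real obstacle to be the bookkeeping in the profile computation: one must verify carefully that the wells of distinct layers never interfere on a common hyperplane (this is exactly where $b>2a$ is used), that every intermediate layer $i_1\in\{\ell_1,\dots,j_1\}$ actually carries a well of $V$ (true since $J$ is a box containing both $j$ and $\ell$, so one may place $\x$ above such a centre), and that the single-well expression for $w(c)$ is then legitimate. Once $w(c)$ is identified, the periodicity, evenness, and the reduction to $2\int_0^{b/2}\sqrt{f-f(0)}$ are routine.
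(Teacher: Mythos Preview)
Your proof is correct and follows essentially the same route as the paper: both arguments reduce the Agmon weight to the one--dimensional profile along the coordinate of maximal displacement by using that $V(\x)-v_{min}\ge V(x_1,0,\dots,0)-v_{min}$ (your $w(x_1)$), and then recognize that this profile integrates to exactly $S_0$ over each lattice period via formula~(\ref{F5}). The only cosmetic difference is packaging: the paper bounds the path integral directly, while you recast the same inequality as admissibility of the test function $u(\x)=\int_{\ell_1 b}^{x_1}\sqrt{w}$ in the variational characterization of $d_A$; your version is a bit more explicit about why the intermediate layers lie in $J$ and why the wells of distinct layers do not interfere, points the paper leaves implicit.
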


\begin {proof}
Assume that $s:= \lceil j-\ell \rceil \not= 0$, otherwise $j=\ell $ and (\ref {F7}) immediately follows. \ Assume also, for argument's sake, that $|j_1 - \ell_1 |=s$ and that $J_m \le 0 \le K_m$ for any $m$. \ By construction of the lattice potential it turns out that 
\bee
\sqrt {V(x_1,x_2, \ldots , x_d) -v_{min}} \ge \sqrt {V(x_1,0, \ldots , 0) -v_{min}} \, , 
\eee
hence, for any regular path $\gamma$ connecting the two points $\x_j$ and $\x_\ell$, it follows that
\bee
\da (\x_j , \x_\ell ) 
&\ge & \int_\gamma \sqrt {V(x_1,x_2, \ldots , x_d) -v_{min}} dx \\ 
&\ge & \int_\gamma \sqrt {V(x_1,0, \ldots , 0) -v_{min}} dx \\ 
&\ge & \int_0^s \sqrt {V(x_1,0, \ldots , 0) -v_{min}} dx_1 \\
&=& s\cdot S_0 \, . 
\eee
\end {proof}

\subsection {$N$-mode approximation}

Now, let $H_D$ be the Dirichlet realization of the Schr\"odinger operator formally defined on $L^2 (B_S(0),d\x )$ by
\be
H_D = - \hbar^2 \Delta + v \label {F8}
\ee
where $B_S(0)$ is the ball with center at $\x =0$ and radius $S >2 S_0$, as defined in (\ref {F6Bis}). \ Since the bottom of $v(\x)$ is not degenerate, then the Dirichlet problem associated to the single-well trapping potential $v(\x )$ has spectrum $\sigma (H_D)$ with ground state 
\bee
\lambda_D = v_{min} + d \sqrt {\mu} \hbar + \asy (\hbar^2) \, , \ \mu =\frac 12 f''(0),
\eee
such that 
\bee
\mbox {dist} \left [ \lambda_D , \sigma (H_D) \setminus \{ \lambda_D \} \right ] \ge 2 C \hbar 
\eee
for some $C>0$. \ The normalized eigenvector $\psi_D (\x )$ associated to $\lambda_D$ is localized in a neighborhood of $\x =0$ and it exponentially decreases as $\asy \left ( \hbar^{-m} e^{-\da (\x )/\hbar } \right )$ for some $m>0$, and where $\da (\x) := \da (\x , 0)$ is the Agmon distance between $\x$ and the point $\x =0$. \ In particular, in a neighborhood of $\x =0$ then $\psi_D (\x ) \sim \mu^{d/8} (\pi \hbar)^{-d/4} e^{-\sqrt {\mu}|x|^2/2\hbar}$.

The bottom of the spectrum $\sigma (H_0)$ of $H_0$ contains exactly $N$ eigenvalues $\lambda_j$, $j \in J$, such that 
\bee
\lambda_j - \lambda_D = \asy (e^{-\rho/\hbar}) 
\eee
for any $0<\rho < S_0$; this result is a consequence of the fact that the multiple well potential $V(\x)$ is given by a superposition of $N$ exactly equal wells displaced on a regular lattice. \ Furthermore
\bee
\mbox {dist} \left [ \{ \lambda_j \}_{j\in J}, \sigma (H_0) \setminus \{ \lambda_j \}_{j\in J} \right ] > C \hbar \, . 
\eee

Let $F$ be the eigenspace spanned by the eigenvectors $\psi_j$ associated to the $N$ eigenvalues $\lambda_j$. \ Then, the restriction $H_0|_F$ of $H_0$ to the subspace $F$ can be represented in the basis of orthonormalized vectors $\phi_j$, $j \in J$, such that
\be
\phi_j (\x ) - \varphi_j (\x ) = \asy (e^{-\rho /\hbar }) \ \mbox { where } \ \varphi_j (\x )= \psi_D (\x - \x_j ); \label {F9}
\ee
hence, the vector $\phi_j (\x )$ is localized in a neighborhood of the minima point $\x_j$. \ More precisely, in the semiclassical limit it follows that (for the proof we refer to Theorem 4.3.4 and Theorem 4.4.6 by \cite {H}).

\begin {lemma} \label {Lemma3}
Up to an error of order $\asy ( \hbar^\infty  e^{-S_0/\hbar} )$ the restriction $H_0|_F$ of $H_0$ to the subspace $F$ is represented in the basis $\phi_j (\x )$, $j \in J$, by the square matrix $T$ defined as
\be
T_{j,\ell } = 
\left \{
\begin {array}{ll}
\lambda_D & \mbox { if } |j-\ell | =0 \\ 
- \beta & \mbox { if } |j-\ell | = 1 \\
0 & \mbox { if } |j-\ell |> 1
\end {array}
\right. \, , \label {F10}
\ee
where $\beta$ is a quantity independent on the indexes and such that 
\be
\frac 1C \hbar^{1/2} \le \beta e^{S_0/\hbar} \le C \hbar^{1-d/2} \, . \label {F12}
\ee
\end {lemma}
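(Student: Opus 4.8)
The plan is to realize the matrix through the inner products $T_{j,\ell} = \langle \phi_j , H_0 \phi_\ell \rangle$: this is legitimate because the $\phi_j$, $j\in J$, form an orthonormal basis of $F$ and $H_0|_F$ is the compression of $H_0$ to $F$. By (\ref{F9}) each $\phi_j$ differs from the translated Dirichlet ground state $\varphi_j(\x)=\psi_D(\x-\x_j)$ by $\asy(e^{-\rho/\hbar})$ and inherits its Agmon decay $\varphi_j(\x)=\asy(\hbar^{-m}e^{-\da(\x,\x_j)/\hbar})$. Every entry of $T$ is therefore an integral of products of exponentially localized functions, and the three-case structure of (\ref{F10}) will emerge from comparing the relevant Agmon distances with $S_0$.

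I would first dispose of the diagonal and the far off-diagonal entries. Splitting $H_0 = H_D^{(j)}+W_j$, where $H_D^{(j)}=-\hbar^2\Delta+v(\cdot-\x_j)$ is the single-well Dirichlet operator centered at $\x_j$ (with eigenpair $(\lambda_D,\varphi_j)$) and $W_j=\sum_{k\ne j}v(\cdot-\x_k)$, one gets $\langle \varphi_j,H_0\varphi_j\rangle=\lambda_D+\langle\varphi_j,W_j\varphi_j\rangle$. Since $W_j$ is supported on the wells at $\x_k$, $k\ne j$, where $|\varphi_j|^2$ decays with an Agmon exponent strictly larger than $S_0$ (this uses $b>2a$), the correction is $\asy(\hbar^\infty e^{-S_0/\hbar})$, so $T_{j,j}=\lambda_D$ at the claimed precision. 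For $|j-\ell|>1$, Lemma \ref{Lemma1} gives $\da(\x_j,\x_\ell)\ge S_1>S_0$, so the product $\varphi_j\varphi_\ell$ carries the weight $e^{-S_1/\hbar}=e^{-(S_1-S_0)/\hbar}e^{-S_0/\hbar}$; as $S_1-S_0>0$, the factor $e^{-(S_1-S_0)/\hbar}$ beats every power of $\hbar$ and the entry is $\asy(\hbar^\infty e^{-S_0/\hbar})$, hence zero at this order.

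The genuine case is $|j-\ell|=1$. Following the interaction-matrix formalism (Thm. 4.3.4 and 4.4.6 of \cite{H}), after an integration by parts the bulk of $\langle\phi_j,H_0\phi_\ell\rangle$ is absorbed into $\lambda_D$ times the exponentially small overlap, and what survives is a flux integral of the WKB tails of $\varphi_j$ and $\varphi_\ell$ across a hypersurface $\Sigma$ separating the two adjacent wells along the minimal geodesic. By (\ref{F5}) this integral carries the weight $e^{-\da(\x_j,\x_\ell)/\hbar}=e^{-S_0/\hbar}$. Its independence of the indices is immediate: any two nearest-neighbor wells are congruent under a lattice translation and the single-well data $(v,\psi_D)$ is shared by all sites, so the flux integral returns one and the same constant $-\beta$ for every adjacent pair, up to $\asy(\hbar^\infty e^{-S_0/\hbar})$ boundary corrections; positivity of $\psi_D$ fixes the sign, giving $\beta>0$.

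The step I expect to be the main obstacle is the two-sided bound (\ref{F12}). The upper estimate $\beta e^{S_0/\hbar}\le C\hbar^{1-d/2}$ follows by inserting the pointwise WKB bounds for $\varphi$ and $\nabla\varphi$ (the derivative of an exponential tail costing a further $\hbar^{-1}$) into the flux integral and majorizing it by its value at the geodesic crossing point times the $\asy(1)$ area of the $(d-1)$-dimensional surface $\Sigma$; the $\hbar^2$ of the kinetic term, the two WKB prefactors and the derivative factor combine to the power $1-d/2$. The lower bound $\beta e^{S_0/\hbar}\ge\frac1C\hbar^{1/2}$ is the delicate part: it requires that the leading contribution actually survive, i.e. that a Laplace evaluation of the integral transverse to the geodesic not cancel. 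This is where positivity of $\psi_D$ together with the non-degeneracy of the minimal geodesic — a single straight segment, guaranteed by the spherical symmetry of $v$ and $f''(0+0)>0$ — enter, producing a genuine transverse Gaussian concentration of width $\sqrt{\hbar}$. Controlling these WKB expansions quantitatively, and excluding subexponential loss, is the technical core supplied by Theorems 4.3.4 and 4.4.6 of \cite{H}; the residual gap between the exponents $\frac12$ and $1-\frac d2$, which closes only at $d=1$, merely reflects that the crude upper bound does not exploit this transverse concentration.
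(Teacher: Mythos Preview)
Your approach is the same as the paper's: the paper does not give an independent proof of this lemma but simply refers the reader to Theorems~4.3.4 and~4.4.6 of \cite{H}, exactly the results you invoke. Your sketch of the interaction--matrix computation (diagonal via the single-well splitting $H_0=H_D^{(j)}+W_j$, far off-diagonal via $S_1>S_0$, nearest-neighbor via the flux/tunneling integral along the minimal geodesic, and the two-sided estimate (\ref{F12}) from the WKB asymptotics with transverse Laplace concentration) is in fact considerably more detailed than what the paper itself supplies, and it is a faithful outline of Helffer's argument.
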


Let $\psi$ be the normalized solution of the NLS equation (\ref {F1}). \ Then $\psi$ may be written in the following form. \ Let $\Pi$ be the projection operator on the space $F$, and let $\Pi_c = \I - \Pi$. \ If the initial state $\psi_0 (\x )=\psi (\x ,0)$ is  prepared on the space $F$ spanned by the $N$ ground state linear eigenvectors, that is 
\bee
\psi_c (\x ,0) =0, \ \mbox { where } \ \psi_c = \Pi_c \psi \, , 
\eee
then it is possible to prove, by making use of ideas similar to the ones developed by \cite {BS,S}, that $\psi_c (\x , t)$ is exponentially small for times of order $\beta^{-1}$, that is 
\bee
\| \psi_c (\cdot , t ) \|_{L^2 (\R^d)} = \asy (e^{-S_0/\hbar }) , \ \forall t \in [0,\beta^{-1}]\, , 
\eee
and that $\Pi \psi$ can be written in the form 
\bee
\Pi \psi (\x ,t) = \sum_{j\in J} d_j (t) \varphi_j (\x ) + \asy \left ( \hbar^\infty e^{-S_0/\hbar } \right ) 
\eee
where $d_j (t)$, $j \in J$, satisfy to the \emph{ $N$-mode approximation} for the NLS equation (\ref {F1}), which consists in to the following system of ODEs
\be
i \hbar \dot d_j = \sum_{\ell \in J}T_{j,\ell} d_\ell + \epsilon c |d_j|^{2\sigma} d_j ,\ j\in J , \label {F13}
\ee
with the normalization condition
\be
\sum_{j\in J} |d_j (t)|^2 =1\, ; \label {F14}
\ee
where %
\bee
c= \| \varphi_j \|_{L^{2\sigma +2}}^{2\sigma +2} = \| \psi_D \|_{L^{2\sigma +2}}^{2\sigma +2}
\eee
is a real valued constant independent of $j$. 

\subsection {Hamiltonian form of the $N$-mode approximation}

If we set 
\bee
d_j = \sqrt {q_j} e^{i\theta_j} ,\ q_j \in [0,1],\ \theta_j \in [0,2\pi ) \, , 
\eee
then, by means of a straightforward calculation, it follows that (\ref {F13}) takes the \emph {Hamiltonian form}
\be
\left \{
\begin {array}{lcl}
\hbar \dot q_j &=& \frac {\partial {\mathcal H}}{\partial \theta_j} = - 2 \beta \sum_{\ell \in J : |j-\ell |=1} \sqrt {q_j q_\ell} \sin (\theta_\ell - \theta_j ) \\ 
\hbar \dot \theta_j &=& - \frac {\partial {\mathcal H}}{\partial q_j} = -\lambda_D + \beta \sum_{\ell \in J : |j-\ell |=1} \sqrt {\frac {q_\ell}{q_j}} \cos (\theta_\ell - \theta_j ) - \epsilon c q_j^\sigma
\end {array}
\right.
\label {F15}
\ee
with Hamiltonian function
\be
{\mathcal H} &:=& {\mathcal H} (q,\theta ) \nonumber \\ 
&=& \lambda_D \sum_j q_j - \beta \sum_{j, \ell \in J : |j-\ell |=1} \sqrt {q_j q_\ell} \cos (\theta_\ell - \theta_j )+ \frac {c\epsilon}{\sigma+1} \sum_{j\in J} q_j^{\sigma+1} \, . \label {F16}
\ee
The normalization condition (\ref {F14}) takes the form
\be
\sum_j q_j =1 \, \label {F17}
\ee
and the Hamiltonian function is a constant of motion, i.e.
\bee
{\mathcal H} [q(t),\theta (t)] = {\mathcal H} [q(t_0),\theta (t_0)]
\eee
for any $t$, $t_0=0$ is the initial instant.

\subsection {Stationary solutions} 

Stationary solutions are the normalized solutions of (\ref {F1}) of the form $\psi (\x ,t)=e^{i\omega t} \psi (\x )$. \ Concerning the study of the stationary solutions has been proved by \cite {FS} that for $1$-dimensional double-well models then the 2-mode approximation gives the stationary solutions for the NLS (\ref {F1}), up to an exponentially small error; furthermore the orbital stability of the stationary solutions is proved. \ The same arguments may be applied to the general $d$-dimensional problem with lattice potential; that is the stationary solutions of $N$-mode approximation (\ref {F13}) and (\ref {F14}), for any $N\ge 2$, give, up to an exponentially small error $\asy (e^{-\rho/\hbar})$, for any $0<\rho < S_0$, the stationary solutions of the NLS (\ref {F1}).

In terms of $N-$mode approximation (\ref {F13}) it consists in looking for the solution of the system of equations 
\be
- \omega \hbar d_j = \sum_{\ell \in J} T_{j,\ell } d_\ell + \epsilon c |d_j|^{2\sigma} d_j , \ j \in J \, . \label {F18}
\ee
As before, if we set $d_j = \sqrt {q_j}e^{i\theta_j}$, then $q_j$ and $\theta_j$ must be the solution of the system of equations
\be
\left \{
\begin {array}{lcl} 
0 &=& \frac {\partial {\mathcal H}}{\partial \theta_j} \\ 
- \hbar \omega &=& - \frac {\partial {\mathcal H}}{\partial q_j}
\end {array}
\right. \, . 
\label {F19}
\ee
Finally, if we set
\bee
\Omega = \frac {\hbar \omega - \lambda_D}{\beta} \ \mbox { and } \ \eta = \frac {c \epsilon}{\beta}
\eee
then finally we get the equations for stationary solutions
\be
\left \{
\begin {array}{lcl}
0 &=&  \sum_{\ell \in J : |j-\ell |=1} \sqrt {q_j q_\ell} \sin (\theta_\ell - \theta_j ) \\ 
-\Omega &=&  \sum_{\ell \in J : |j-\ell |=1} \sqrt {\frac {q_\ell}{q_j}} \cos (\theta_\ell - \theta_j ) - \eta  q_j^\sigma
\end {array}
\right. \label {F19Bis}
\ee

\section {Multi-dimensional double-well potential} \label {Sec3}

We consider now the basic model of multi-dimensional double-well potentials, where the lattice potential has exactly $2^d$ wells, that is we assume that $L_m=0$ and $K_m=1$ for any $m=1,\ldots , d$. \ In this case the matrix $T$ has a special form and its eigenvalues can be explicitly computed.

\subsection {One-dimensional model} In such a case the potential $V(\x)$ is a simply double-well potential with minima points $\x_0$ and $\x_1$ and the matrix $T:=T_1$ simply reduces to 
\bee
T_1 = \left ( 
\begin {array}{cc} 
\lambda_D & -\beta \\ -\beta & \lambda_D 
\end {array} 
\right )
\eee
The matrix $T_1$ has eigenvalues $\mu_1 = \lambda_D -\beta$ and $\mu_2 = \lambda_D + \beta$ with associated normalized eigenvectors $v_1 =\left ( \frac {1}{\sqrt {2}}, \frac {1}{\sqrt {2}} \right )$ and $v_1 =\left ( \frac {1}{\sqrt {2}}, -\frac {1}{\sqrt {2}} \right )$.

\subsection {Two-dimensional model} In such a case the potential $V(\x)$ has $4$ wells with minima points
\bee
\x_{(0,0)},\ \x_{(0,1)},\ \x_{(1,0)},\ \x_{(1,1)}
\eee
and the matrix $T:=T_2$ reduces to
\bee
T_2 = 
\left (
\begin {array}{cccc}
\lambda_D & - \beta & -\beta & 0  \\ 
-\beta & \lambda_D & 0 & -\beta \\ 
-\beta & 0 & \lambda_D & -\beta \\ 
0 & - \beta & - \beta & \lambda_D 
\end {array} 
\right )
=
\left ( 
\begin {array}{cc}
T_1 & - \beta \I_2 \\ -\beta \I_2 & T_1 
\end {array}
\right )
\eee
where $\I_2$ is the $2\times 2$ identity matrix. \ The matrix $T_2$ has eigenvalues $\mu_r$ and associated normalized eigenvectors $v_r$, $r=1,\ldots ,4$, given by 
\be
\begin {array}{lcllcl}
\mu_1 &=& \lambda_D - 2 \beta & v_1 &=& \left ( \frac 12 , \frac 12 , \frac 12 , \frac 12 \right ) \\ 
\mu_2 &=& \lambda_D  & v_2 &=& \left ( 0 , -\frac 1{\sqrt 2} , \frac 1{\sqrt 2} , 0 \right ) \\ 
\mu_3 &=& \lambda_D  & v_3 &=& \left (  -\frac 1{\sqrt 2} , 0 , 0 , \frac 1{\sqrt 2}  \right ) \\
\mu_4 &=& \lambda_D + 2 \beta & v_4 &=& \left ( \frac 12 , -\frac 12 , -\frac 12 , \frac 12 \right ) 
\end {array} \label {F21Bis}
\ee

\subsection {Three-dimensional model} In the case of $d=3$ then the potential $V(\x)$ has $8$ wells with minima points 
\bee
\x_{(0,0,0)},\ \x_{(0,0,1)},\ \x_{(0,1,0)},\ \x_{(0,1,1)},\ \x_{(1,0,0)},\ \x_{(1,0,1)},\ \x_{(1,1,0)},\ \x_{(1,1,1)}\, , 
\eee
and the matrix $T:=T_3$ reduces to 
\bee
T_3 = \left ( 
\begin {array}{cc}
T_2 & - \beta \I_4 \\ -\beta \I_4 & T_2 
\end {array}
\right )
\eee
where $\I_4$ is the $4\times 4$ identity matrix. \ The matrix $T_3$ has eigenvalues $\mu_r$ and associated normalized eigenvectors $v_r$, $r=1,\ldots ,8$, given by
\be
\begin {array}{lcllcl}
\mu_1 &=& \lambda_D - 3 \beta & v_1 &=& \left ( \frac 1{\sqrt {8}} , \frac 1{\sqrt {8}} , \frac 1{\sqrt {8}} , \frac 1{\sqrt {8}} , \frac 1{\sqrt {8}} , \frac 1{\sqrt {8}} , \frac 1{\sqrt {8}} , \frac 1{\sqrt {8}} \right ) \\
\mu_2 &=& \lambda_D -  \beta & v_2 &=& \left ( 0 , 0 , - \frac 12 , - \frac 12 , \frac 12 , \frac 12 , 0 , 0 \right ) \\
\mu_3 &=& \lambda_D -  \beta & v_3 &=& \left ( 0 , - \frac 12 , 0 , - \frac 12 , \frac 12 , 0 , \frac 12 , 0 \right ) \\
\mu_4 &=& \lambda_D -  \beta & v_4 &=& \left ( - \frac 12 , 0 , 0 ,  \frac 12 , -\frac 12 , 0 , 0 , \frac 12 \right ) \\
\mu_5 &=& \lambda_D +  \beta & v_5 &=& \left ( 0 , \frac 12 , 0 ,  - \frac 12 , -\frac 12 , 0 , \frac 12 , 0 \right ) \\
\mu_6 &=& \lambda_D +  \beta & v_6 &=& \left (  \frac 12 , 0 , 0 ,  -\frac 12 , -\frac 12 , 0 , 0 , \frac 12 \right ) \\
\mu_7 &=& \lambda_D +  \beta & v_7 &=& \left ( 0 , 0 ,  \frac 12 , - \frac 12 , - \frac 12 , \frac 12 , 0 , 0 \right ) \\
\mu_8 &=& \lambda_D + 3 \beta & v_8 &=& \left ( -\frac 1{\sqrt {8}} , \frac 1{\sqrt {8}} , \frac 1{\sqrt {8}} , -\frac 1{\sqrt {8}} , \frac 1{\sqrt {8}} , -\frac 1{\sqrt {8}} , -\frac 1{\sqrt {8}} , \frac 1{\sqrt {8}} \right ) 
\end {array} \label {F23Bis}
\ee

\subsection {Any dimension} In dimension $d+1$, for $d\ge 1$, the potential $V(\x)$ has $2^{d+1}$ wells with minima points 
\bee
\x_{(0,j')}, \ \x_{(1,j')}
\eee
where $j'\in \{0,1\}^d$ are the indexes of the model in dimension $d$. \ Then the matrix $T_{d+1}$ has the following form 
\be
T_{d+1} = \left ( 
\begin {array}{cc}
T_d & - \beta \I_{2^d} \\ -\beta \I_{2^d} & T_d 
\end {array}
\right ) 
\label {F20}
\ee
and it has $2^{d+1}$ eigenvalues (counting multiplicity). \ We state now a general result.

\begin {lemma} \label {Lemma4} Let $\Sigma^0 =\{ \lambda_D \}$ and, by induction, let
\bee
\Sigma^{d+1} = \left \{ \lambda \in \R \ : \ \exists \mu \in \Sigma^d \ \mbox { such that } \ |\lambda - \mu | = \beta \right \} \, ; 
\eee
that is:
\bee
\Sigma_{2d} &=& \left \{ \lambda_D , \lambda_D \pm 2\beta , \ldots , \lambda_D \pm 2 ( d -1) \beta , \lambda_D \pm 2 d \beta \right \} \\ 
\Sigma_{2d+1} &=& \left \{ \lambda_D \pm \beta, \lambda_D \pm 3\beta , \ldots , \lambda_D \pm (2  d -1) \beta , \lambda_D \pm (2 d+1) \beta \right \}
\eee
Then, the set of eigenvalues of $T_{d+1}$ coincides with $\Sigma^{d+1}$. \ Furthermore, if $\mbox {\rm mult} (\lambda)$ denotes the multiplicity of the eigenvalue $\lambda$, then
\bee
\mbox {\rm mult} (\lambda) = \sum_{\mu \in \Sigma^d \ : \ | \lambda - \mu | =\beta } \mbox {\rm mult} (\mu )\, . 
\eee
\end {lemma}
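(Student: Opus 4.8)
The plan is to argue by induction on $d$, exploiting the recursive block structure (\ref{F20}) of $T_{d+1}$. The base case is the trivial $1\times 1$ matrix $T_0 = (\lambda_D)$, which has the single eigenvalue $\lambda_D$ with multiplicity one, matching $\Sigma^0 = \{\lambda_D\}$; applying (\ref{F20}) with $d=0$ reproduces the explicitly computed $T_1$, and the induction then builds $T_2, T_3, \ldots$ from this. The whole argument rests on one elementary observation about block matrices of the form (\ref{F20}), after which both the spectrum and the multiplicity formula follow by a clean counting argument.

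First I would establish the key step. Since $T_d$ is real symmetric, fix an orthonormal eigenbasis $\{v_k\}_{k=1}^{2^d}$ of $T_d$, say $T_d v_k = \mu_k v_k$. For any such eigenpair $(\mu, v)$ the two vectors $(v, v)$ and $(v, -v)$ satisfy
\[
\left(
\begin{array}{cc}
T_d & -\beta \I \\
-\beta \I & T_d
\end{array}
\right)
\left(
\begin{array}{c}
v \\
\pm v
\end{array}
\right)
=
\left(
\begin{array}{c}
(\mu \mp \beta) v \\
\pm (\mu \mp \beta) v
\end{array}
\right)
= (\mu \mp \beta)
\left(
\begin{array}{c}
v \\
\pm v
\end{array}
\right),
\]
because $T_d$ commutes with the scalar block $-\beta \I$. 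Thus each eigenpair of $T_d$ produces two eigenpairs of $T_{d+1}$, with eigenvalues $\mu - \beta$ (for $(v,v)$) and $\mu + \beta$ (for $(v,-v)$). The $2\cdot 2^d = 2^{d+1}$ vectors $\frac{1}{\sqrt 2}(v_k, v_k)$ and $\frac{1}{\sqrt 2}(v_k, -v_k)$ are pairwise orthonormal, hence form a complete eigenbasis of $T_{d+1}$. This shows at once that the spectrum of $T_{d+1}$ equals $\{\mu \pm \beta : \mu \in \Sigma^d\}$, which is exactly $\Sigma^{d+1}$, since $|\lambda - \mu| = \beta$ means precisely $\lambda = \mu \pm \beta$.

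Finally I would read off the multiplicities from this eigenbasis. A fixed value $\lambda$ is the eigenvalue of $\frac{1}{\sqrt 2}(v_k, -v_k)$ exactly when $\mu_k = \lambda - \beta$, and of $\frac{1}{\sqrt 2}(v_k, v_k)$ exactly when $\mu_k = \lambda + \beta$, so that
\[
\mbox{\rm mult}(\lambda)
= \mbox{\rm mult}_{T_d}(\lambda - \beta) + \mbox{\rm mult}_{T_d}(\lambda + \beta)
= \sum_{\mu \in \Sigma^d \, : \, |\lambda - \mu| = \beta} \mbox{\rm mult}(\mu),
\]
with the convention that a multiplicity vanishes for values outside $\Sigma^d$. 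This is the asserted recursion, and the explicit lists $\Sigma_{2d}$ and $\Sigma_{2d+1}$ follow by unwinding it from $\Sigma^0$. I do not expect a serious obstacle: the entire content is the commutation of $T_d$ with $-\beta\I$, which decouples (\ref{F20}) into the two sign sectors. The only point requiring genuine care is the inductive bookkeeping — checking that the two sectors are mutually orthogonal and jointly complete, so that no eigenvalue is double-counted and the multiplicities add exactly as claimed.
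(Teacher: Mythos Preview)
Your proof is correct and complete, but it takes a different route from the paper's. The paper argues via the characteristic polynomial: writing $\det(T_{d+1}-\lambda\I)$ as the $2\times 2$ block determinant in (\ref{F20Bis}), it applies Schur's formula (valid a priori only for $\lambda\notin\sigma(T_d)$) to obtain the factorization
\[
\det\bigl(T_d-(\lambda+\beta)\I\bigr)\cdot\det\bigl(T_d-(\lambda-\beta)\I\bigr)=0,
\]
and then closes the loop by checking inductively that $\Sigma^d\cap\Sigma^{d+1}=\emptyset$, so the provisional assumption $\lambda\notin\sigma(T_d)$ is in fact automatic.

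Your approach instead diagonalizes $T_{d+1}$ explicitly: splitting $\R^{2^{d+1}}$ into the symmetric and antisymmetric sectors $(v,\pm v)$ produces a full orthonormal eigenbasis with eigenvalues $\mu\mp\beta$, from which both the spectrum and the multiplicity recursion can be read off directly. This is more elementary (no Schur formula, no provisional hypothesis to discharge) and more constructive, since it actually exhibits the eigenvectors --- in particular it immediately yields the ground-state eigenvector $(2^{-d/2},\ldots,2^{-d/2})$ noted in the Remark following the lemma. The determinant argument is a bit shorter to write and keeps the focus purely on the spectrum, but your eigenvector route handles the multiplicity statement more transparently, whereas in the paper's proof that part is left implicit in the factorization.
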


\begin {proof}
From (\ref {F20}) we have that the eigenvalue equation for $T_{d+1}$ is given by 
\be
0 = \left | T_{d+1} - \lambda \I_{2^{d+1}} \right | = 
\left | 
\begin {array}{cc} 
T_d -\lambda \I_{2^{d}}& -\beta \I_{2^{d}} \\ -\beta \I_{2^{d}} & T_d -\lambda \I_{2^{d}}
\end {array}
\right | \, . \label {F20Bis}
\ee
If we assume for a moment that $\lambda$ is not an eigenvalue of $T_d$ then from Schur's formula it follows that equation (\ref {F20Bis}) becomes 
\bee
0 = \left | T_{d} - (\lambda + \beta) \I_{2^d} \right | \cdot \left | T_{d} - (\lambda - \beta) \I_{2^d} \right |
\eee
from which follows that $\lambda \in \Sigma^{d+1}$ if, and only if, $\lambda \pm \beta \in \Sigma^d$. \ Since $\Sigma^0 = \{ \lambda_D \}$ has cardinality $1$, then $\Sigma^d$ has cardinality (counting multiplicity) $2^d$ and, by induction, we have that if $\lambda \in \Sigma^d$, then $\lambda \notin \Sigma^{d+1}$.
\end {proof}

\begin {remark}
It is not hard to see that the ground state associated to the eigenvalues $\lambda = \lambda_D - d \beta$ of $T_d$ has normalized eigenvector $v = \left ( 2^{-d/2} , \ldots , 2^{-d/2} \right )$.
\end {remark}

\section {Analysis of the bifurcation of the ground state} \label {Sec4}

Now, we are going to discuss in dimension 1, 2 and 3 how the ground state stationary solutions of the linear problem bifurcate when we introduce the nonlinear term. 

\subsection {One-dimensional model} The model in dimension $1$ has been largely discussed in previous papers (see, e.g., \cite {FS}), thus let us omit the details. \ In dimension $d=1$ then (\ref {F19Bis}) takes the form
\be
\left \{
\begin {array}{lcl}
\sqrt {q_1 q_2} \sin (\theta_2 - \theta_1 ) &=& 0 \\ 
- \sqrt {q_2/q_1} \cos (\theta_2 - \theta_1 ) + \eta q_1^\sigma &=& \Omega \\ 
- \sqrt {q_1/q_2} \cos (\theta_1 - \theta_2 ) + \eta q_2^\sigma &=& \Omega \\ 
q_1+q_2 &=& 1
\end {array}
\right. \label {F24Bis}
\ee
and for $\eta =0$ the above equation has ground state corresponding to 
\be
q_1 = q_2 = \frac 12 \, , \ \theta_1 = \theta_2 \ \mbox { and } \ \Omega = -1 +\frac {1}{2^\sigma}\eta \, . \label {F23x}
\ee
Now, we observe that all the solutions of (\ref {F24Bis}) are such that $q_1 \not= 0$ and $q_2 \not= 0$. \ Indeed, if there exists a solution of (\ref {F24Bis}) such that, for instance, $q_1 = 0$ and $q_2=1$ (or $q_1 = 1$ and $q_2=0$) at some $\eta $ then $d_1=0$ and $d_2 \not= 0$ in contradiction with (\ref {F18}) for $d=1$. \ Then stationary solutions are such that $\theta_1 = \theta_2$ or $\theta_1 - \theta_2 = \pi$ and the equation that gives the stationary solutions that bifurcate from the linear ground state simply reduces to
\be
\left \{
\begin {array}{lcl}
- \sqrt {q_2/q_1} + \eta q_1^\sigma &=& \Omega \\ 
- \sqrt {q_1/q_2} + \eta q_2^\sigma &=& \Omega \\ 
q_1+q_2 &=& 1
\end {array}
\right. \label {F23xx}
\ee
First of all we remark that the problem is invariant under the reflection
\bee
\x_0 \leftrightarrow \x_1 \, , \ \mbox { i.e. } \ q_1 \leftrightarrow q_2 \, . 
\eee
Hence, asymmetrical solutions, if there exists, are doubly degenerate.

If we set $z=q_1-q_2 \in (-1,+1)$, that is 
\bee
q_1 = \frac 12 + \frac 12 z \ \mbox { and } \ q_2 = \frac 12 - \frac 12 z
\eee
and if we set
\bee
\chi = \frac {\eta}{2^\sigma}
\eee
then equation (\ref {F23xx}) reduces to the form 
\be
\sqrt {\frac {1-z}{1+z}} - \chi \left ( 1 + z \right )^\sigma - \sqrt {\frac {1+z}{1-z}} + \chi \left ( 1 - z \right )^\sigma =0 \label {F26Bis}
\ee
and its solutions are given by 

\begin {itemize} 

\item [-] $z=0$, which coincides with the symmetric solution (\ref {F23x});

\item [-] for $\sigma =1$, $z= \pm \frac {1}{\chi} \sqrt {\chi^2-1}$, provided that $\chi <-1$. 

\end {itemize} 

Therefore,  for $\sigma=1$ we have that the symmetric ground state solution bifurcates at $\eta =-2$ and the new asymmetrical solutions are such that (see Fig. \ref {Fig1})
\be
z = \pm \frac {1}{\eta} \sqrt {\eta^2 - 4} \ \mbox { and } \ \Omega = \eta \, , \ \mbox { for } \eta \le -2 \, . 
\ee
In conclusion, for $\eta \ge -2$ the stationary solutions of equation (\ref {F19Bis}) for the cubic (corresponding to $\sigma =1$) one-dimensional double-well model are only given by the symmetric stationary solution (\ref {F23x}). \ At $\eta =-2$ this solution bifurcates and the new solutions are asymmetrical. \ The transition from the symmetric stationary solution to the asymmetrical stationary solution is smooth. \ In particular, in Table \ref {tabella1} we collect the values for $q_1$ and $q_2$ corresponding to the asymmetric stationary solution; it turns out that the asymmetrical stationary solutions become \emph {gradually} localized on only one of the two wells when $|\eta |>2$ increases (see also Fig. \ref {Fig1Bis}).
 
\begin{table}
\begin{center}
\begin{tabular}{|l||c|c|c|} \hline
$\eta $ 			& $q_1$ & $q_2$ & $\Omega$ \\ \hline  
$-2.01 $ & $0.45$ & $0.55$ & $   -2.01   $ \\ \hline 
$-2.1 $ & $0.35$ & $0.65$ & $   -2.1   $ \\ \hline 
$-2.5 $ & $0.2$ & $0.8$ & $   -2.5   $ \\ \hline
\end{tabular}
\caption{Values of the coefficients $q_1$ and $q_2$ of the asymmetrical stationary solutions for the cubic one-dimensional double-well model for some values of the parameter $\eta$.}
\label{tabella1}
\end{center}
\end {table}

\begin{figure}
\begin{center}
\includegraphics[height=6cm,width=6cm]{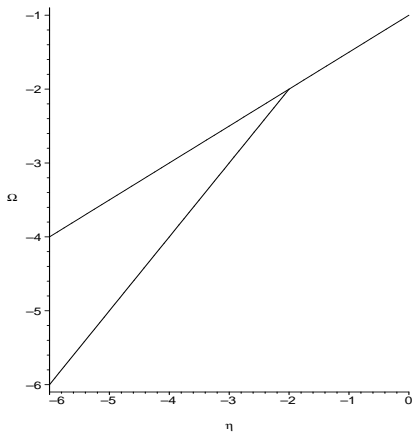}
\caption{Bifurcation picture of the energy $\Omega$ as function of $\eta$ for the cubic one-dimensional double-well  model.}
\label {Fig1}
\end{center}
\end{figure}

\begin{figure}
\begin{center}
\includegraphics[height=4cm,width=4cm]{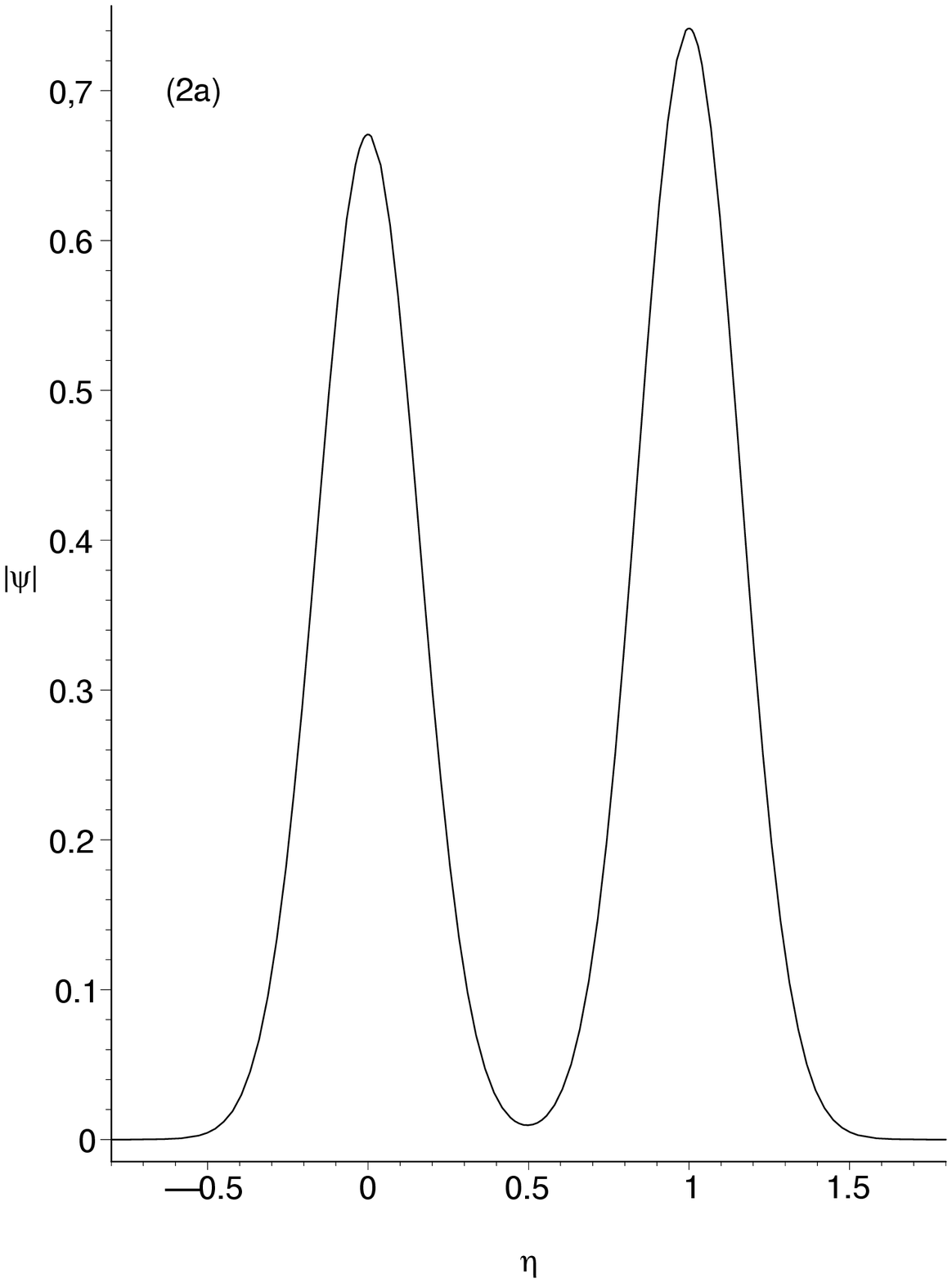}
\includegraphics[height=4cm,width=4cm]{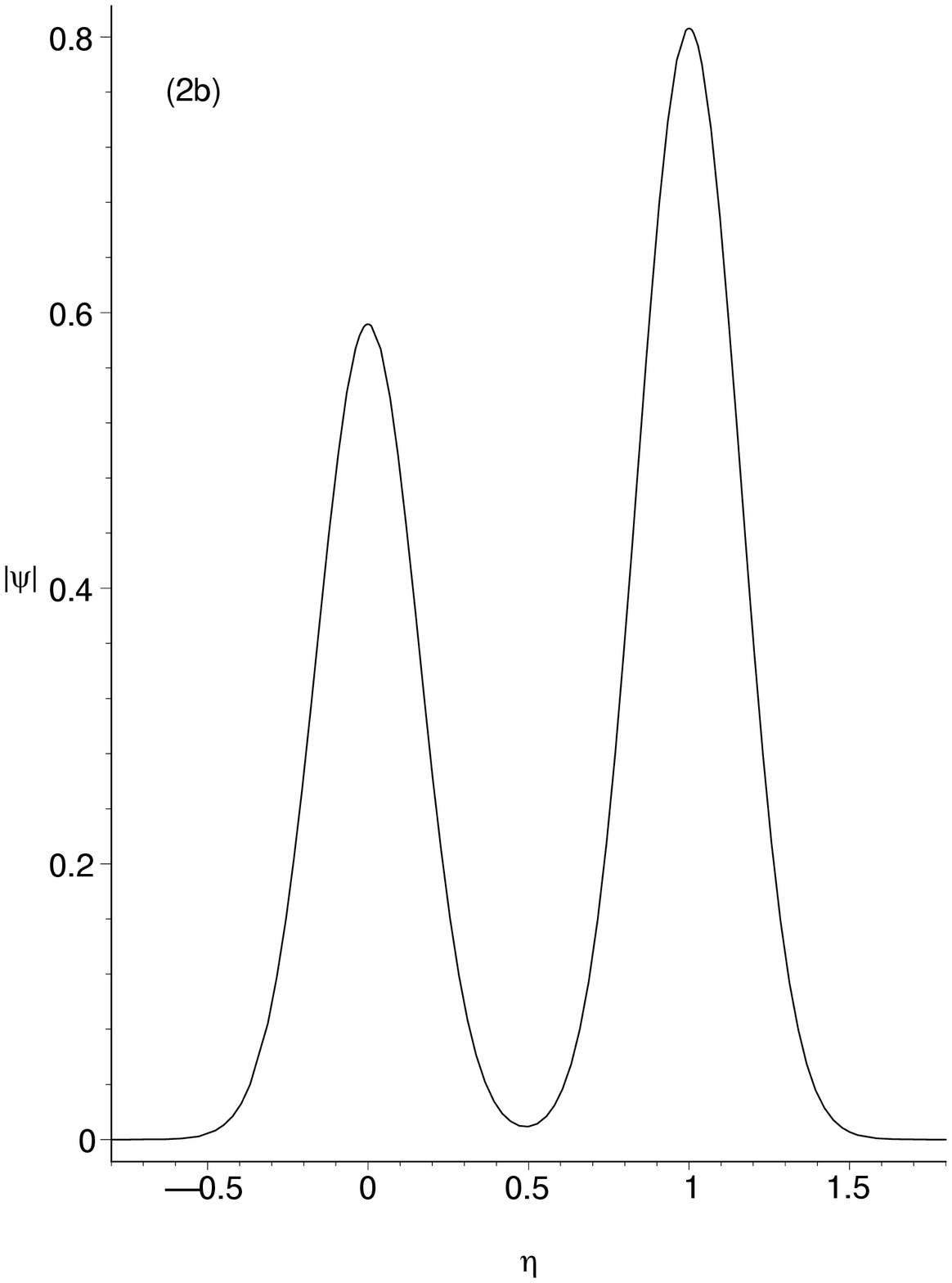}
\includegraphics[height=4cm,width=4cm]{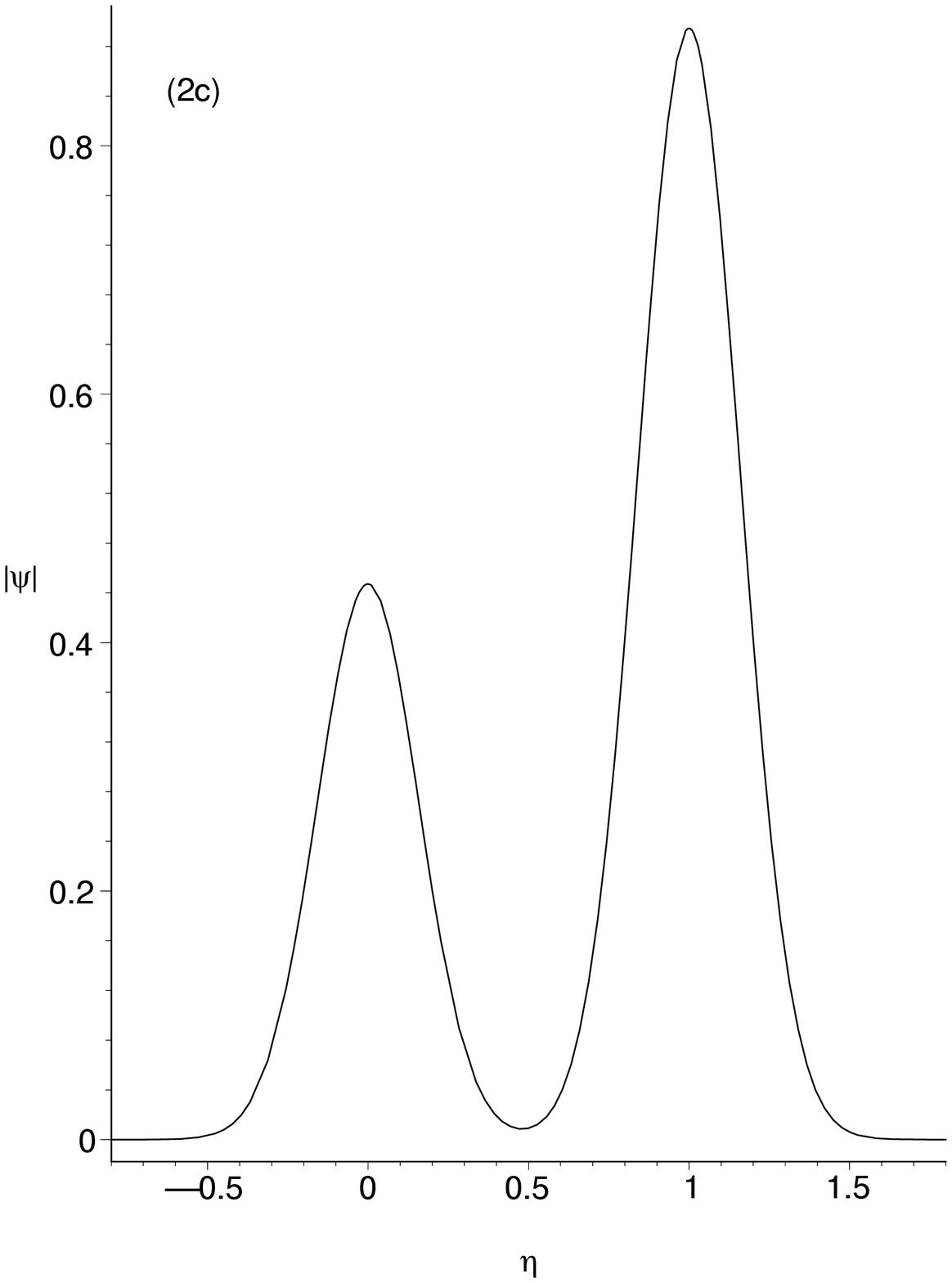}
\caption{Picture of the absolute value of the asymmetrical stationary wave-function $\psi = \sqrt {q_1} \varphi_1 + \sqrt {q_2} \varphi_2 $ for different values of $\eta$ ($\eta =-2.01$ in panel (2a), $\eta =-2.1$ in panel (2b) and $\eta =-2.5$ in panel (2c)) where $\varphi_j$, $j=1,2$, are the vectors, defined in (\ref {F9}), localized on the single lattice cell $\x_j$. \ The asymmetrical solution is, for $\eta $ close to the bifurcation point at $\eta=-2$, delocalized between both two wells; when $|\eta|>2$ increases then the wavefunction is going to be fully localized on just one of the two wells.}
\label {Fig1Bis}
\end{center}
\end{figure}

\subsection {Two-dimensional model} \label {M2} In such a case (\ref {F19Bis}) takes the form
\be
\left \{
\begin {array}{lcl}
\sqrt {q_1 q_2} \sin (\theta_2 - \theta_1 ) + \sqrt {q_1 q_3} \sin (\theta_3 - \theta_1 ) &=& 0 \\ 
\sqrt {q_2 q_1} \sin (\theta_1 - \theta_2 ) + \sqrt {q_2 q_4} \sin (\theta_4 - \theta_2 ) &=& 0 \\ 
\sqrt {q_3 q_1} \sin (\theta_1 - \theta_3 ) + \sqrt {q_3 q_4} \sin (\theta_4 - \theta_3 ) &=& 0 \\ 
\sqrt {q_4 q_2} \sin (\theta_2 - \theta_4 ) + \sqrt {q_4 q_3} \sin (\theta_3 - \theta_4 ) &=& 0 \\ 
- \left ( \sqrt {q_2/q_1} \cos (\theta_2 - \theta_1 ) +\sqrt {q_3/q_1}\cos (\theta_3 - \theta_1 ) \right ) + \eta q_1^\sigma &=& \Omega \\ 
- \left ( \sqrt {q_1/q_2} \cos (\theta_1 - \theta_2 ) +\sqrt {q_4/q_2}\cos (\theta_4 - \theta_2 ) \right ) + \eta q_2^\sigma &=& \Omega \\ 
- \left ( \sqrt {q_1/q_3} \cos (\theta_1 - \theta_3 ) +\sqrt {q_4/q_3}\cos (\theta_4 - \theta_3 ) \right ) + \eta q_3^\sigma &=& \Omega \\ 
- \left ( \sqrt {q_2/q_4} \cos (\theta_2 - \theta_4 ) +\sqrt {q_3/q_4}\cos (\theta_3 - \theta_4 ) \right ) + \eta q_4^\sigma &=& \Omega \\ 
q_1+q_2+q_3+q_4 &=& 1
\end {array}
\right. \label {F27Bis}
\ee
First of all we remark that for dimension $d>1$ equations (\ref {F27Bis}) admit solutions with $q_j=0$ for some $j=0$; in such a case it turns out that the solutions of equations (\ref {F19Bis}), or (\ref {F27Bis}), with some $d_j=0$ are necessarily of the form $d_2=d_3=0$ and $d_1 =-d_4$, or $d_1=d_4$ and $d_2 =-d_3$. \ Then, these solutions are the continuation of the solutions $v_2$ and $v_3$ given in (\ref {F21Bis}) at $\eta =0$. \ On the other side, we remark that at $\eta =0$ then the above equation has ground state corresponding to the solution $v_1$ given in (\ref {F21Bis}) where $q_j = \frac 14$ and $\theta_j =\theta_\ell$ for any value of the indexes $j,\ell =1,2,3,4$. \ By continuity, for $\eta \not= 0$ then the continuation of the solution $v_1$ will have all $q_j >0$ and $\theta_j = \theta_\ell$, for any $ j,\ell =1,2,3,4$. \ In fact, if we assume that, for instance, $\theta_2 > \theta_1$, then the first equation of (\ref {F27Bis}) implies that $\theta_3 - \theta_1 <0$ and the second one implies that $\theta_4 - \theta_2 <0$; hence the third equation implies that $\theta_4 - \theta_3 >0$, and, finally, we have a contradiction because of the fourth equation.

In conclusion, in order to find the bifurcations from the ground state solution we can restrict ourselves to study the following system of equations
\be
f_j =0 , \ j=1,\ldots , 5, \ \mbox { where } \ 
\left \{
\begin {array}{lcl}
f_1 &=& - \left ( \sqrt {q_2/q_1}  +\sqrt {q_3/q_1} \right ) + \eta q_1^\sigma - \Omega \\ 
f_2 &=& - \left ( \sqrt {q_1/q_2}  +\sqrt {q_4/q_2} \right ) + \eta q_2^\sigma - \Omega \\ 
f_3 &=& - \left ( \sqrt {q_1/q_3}  +\sqrt {q_4/q_3} \right ) + \eta q_3^\sigma - \Omega \\ 
f_4 &=& - \left ( \sqrt {q_2/q_4}  +\sqrt {q_3/q_4} \right ) + \eta q_4^\sigma - \Omega \\ 
f_5 &=& q_1+q_2+q_3+q_4 - 1
\end {array}
\right. \label {F21}
\ee
which always has a symmetric solution
\be
q_1=q_2=q_3=q_4 = \frac 14 \ \mbox { with } \ \Omega = \eta 4^{-\sigma} -2 
\label {F22}
\ee

Let us remark that the problem is invariant under the $8$ transformation of the Dihedral group of the square. \ Hence, asymmetrical solutions, if there exist, are degenerate.

Then we look for the symmetric and partially symmetric solutions coming from the solution (\ref {F22}) by bifurcation. 

\subsubsection {Symmetric solutions - mirror symmetry} \label {MS} We look for the solutions such that 
\bee
q_1 = q_2 \ \mbox { and } \ q_3 =q_4 
\eee
and similarly such that $q_1 =q_3$ and $q_2 =q_4$. \ Under these conditions then  equations (\ref {F21}) become
\be
\left \{
\begin {array}{lcl}
- \left ( 1  +\sqrt {q_3/q_1} \right ) + \eta q_1^\sigma &=& \Omega \\ 
- \left ( 1+ \sqrt {q_1/q_3}  \right ) + \eta q_3^\sigma &=& \Omega \\ 
q_1+q_3 &=& \frac 12
\end {array}
\right. \label {F23}
\ee
If we set $z=2(q_1-q_3) \in (-2,+2)$, that is 
\bee
q_1 = \frac 14 (1+ z) \ \mbox { and } \ q_3 = \frac 14 (1-z) 
\eee
then equation (\ref {F23}) takes the form (\ref {F26Bis}), provided we set $\chi =  \frac {\eta}{4^\sigma}$, which has solutions

\begin {itemize}

\item [-] $z=0$, which corresponds to the case (\ref {F22});

\item [-] for $\sigma=1$, $ z = \pm \frac {1}{\eta} \sqrt {\eta^2 - 16}$ for $\eta \le -4$ which gives $\Omega = - 1 + \frac 12 \eta$.

\end {itemize}

Thus, at $\eta =-4$ the solution (\ref {F22}) bifurcates.

\subsubsection {Symmetric solutions - point symmetry} \label {PS} We look for solutions such that 
\be
q_1 = q_4 \ \mbox { and } \ q_2 =q_3 \label {F230Bis}
\ee
Then equations (\ref {F21}) become
\be
\left \{
\begin {array}{lcl}
- 2 \sqrt {q_2/q_1} + \eta q_1^\sigma &=& \Omega \\ 
- 2 \sqrt {q_1/q_2} + \eta q_2^\sigma &=& \Omega \\ 
q_1+q_2 &=& \frac 12
\end {array}
\right. \label {F24}
\ee
If, similarly to the previous case,  we set  $z=2(q_1-q_2) \in (-2,+2)$ then such an equation takes the form(\ref {F26Bis}) provided we set $\chi =  \frac {\eta}{2\cdot 4^\sigma}$, which has solutions

\begin {itemize}

\item [-] $z=0$, which corresponds to the case (\ref {F22});

\item [-] for $\sigma=1$, $ z = \pm \frac {1}{\eta} \sqrt {\eta^2 - 64}$ for $\eta \le -8$ which gives $\Omega =  + \frac 12 \eta$.

\end {itemize}

Thus, at $\eta =-8$ the solution (\ref {F22}) bifurcates again.

\subsubsection {Partially symmetric solutions} We consider now solutions such that 
\be
q_1 = q_4 \ \mbox { and } \ q_2 \not= q_3 \, , \label {F23Ter}
\ee
or similarly such that $q_2 =q_3$ and $q_1 \not= q_4$. In such a case the numerical analysis of equation (\ref {F21}) shows that at $\eta^2_{crit} =-3.5836$ a saddle point occurs and the solution has two branches. \ One branch denoted as branch (a) is connected with the branch of solutions (\ref {F22}) at $\eta =-4$, while on the other branch, denoted as branch (b), $\Omega$ behaves like $\eta$ (see Fig. \ref {Fig2}). \ The relevant fact is that on the branch (b) the wavefunction is going to be well localized on only one well. \ For instance, at $\eta =\eta^2_{crit}$ the value of $q_2$ is equal to $0.571$, and at $\eta =-3.95$ the value of $q_2$ of the solution on the branch (b) is equal to $0.781$, which means that the wavefunction is practically fully localized on the well around $\x_{(0,1)}$ (see Table \ref {tabella2} for different values of $\eta$, see also Fig. \ref {Fig3} and \ref {Fig3}). 

\subsubsection {Classification of the bifurcations}

Bifurcation points are the solutions $(q_1,q_2,q_3,q_4, \Omega ,\eta )$ of the system of equations (\ref {F21}) under the condition 
\be
\mbox {det} \left ( \frac {\partial f_i}{\partial q_h} \right )_{i,h=1,2,3,4,5} =0 \, , \label {Bif}
\ee
where we denote $q_5 =\Omega$. \ Bifurcations of the symmetric stationary solution (\ref {F22}) are the solutions of equation (\ref {Bif}) with $q_j=\frac 14$, $j=1,2,3,4$; this equation has $2$ solutions $\eta =-2$, with double multiplicity, and $\eta =-4$ with multiplicity one. \ Furthermore, we can numerically compute the other solutions of the system $f_j=0$, $j=1,2,3,4,5$; in particular, we can observe the occurrence of a saddle point and a bifurcation. \ That is: 

\begin {itemize}

\item [1.] At $\eta = \eta^2_{crit} =-3.5836$ a saddle point occurs and the new stationary solutions have asymmetrical wavefunctions such that $q_1 = q_4$ and $q_2 \not= q_3 $ and where the wavefunction corresponding to the branch (b) is localized on one well.

\item [2.] At $\eta =-4$ the solution (\ref {F22}) bifurcates. \ We have three branches, one is the branch of the solution (\ref {F22}), another one is the branch (a) associated to the saddle point at $\eta = \eta^2_{crit}$, and the last branch is the branch of the solutions observed in \S \ref {MS}.

\item [3.] At $\eta =-8$ the solution (\ref {F22}) bifurcates again. \ We have two branches, one is the branch of the solution (\ref {F22}) and the other branch is the branch of the solutions observed in \S \ref {PS}.

\item [4.] At $\eta =-8.4853$ the solution (\ref {F230Bis}) bifurcates. \ We have two branches, one is the branch of the solution (\ref {F230Bis}) and the other branch is a branch of solutions of the kind (\ref {F23Ter}).

\end {itemize}

In conclusion, for $\eta >\eta^2_{crit}$ there exists only one solution and it is equally distributed on the four wells; once $\eta$ reaches the value $\eta^2_{crit}$ then two (families of) new solutions suddenly appears, and the solutions associated to the branch (a) of Fig. \ref {Fig2} are fully localized on a single well. \ This phenomenon is the opposite of the one observed in the one-dimensional model where the localization effect gradually occurs, in this case the localization effect suddenly occurs.

\begin{table}
\begin{center}
\begin{tabular}{|l|c|c|c|c|c|c|c|c|c|c|} \hline
\multicolumn{1}{|c|}{} &  \multicolumn{5}{|c|}{Branch (a)} &  \multicolumn{5}{|c|}{Branch (b)}\\ \hline
$\eta $ 			& $q_1$ & $q_2$ & $q_3$ & $q_4$ & $\Omega$ & $q_1$ & $q_2$ & $q_3$ & $q_4$ & $\Omega$  \\ \hline  
$-3.5836$ &  &  & &  &  & $0.168$ & $0.581$ & $0.082$ & $0.168$ & $   -3.159   $\\ \hline 
$-3.6$ & $0.186$ & $0.527$ & $0.1$ & $0.186$ & $   -3.0867   $ & $0.152$ & $0.628$ & $0.067$ & $0.152$ & $   -3.24617   $\\ \hline 
$-3.7$ & $0.214$ & $0.436$ & $0.136$ & $0.214$ & $   -3.152   $ & $0.125$ & $0.703$ & $0.047$ & $0.125$ & $   -3.445   $\\ \hline 
$-3.8$ & $0.229$ & $0.379$ & $0.162$ & $0.229$ & $   -2.998   $& $0.110$ & $0.742$ & $0.037$ & $0.110$ & $   -3.591   $\\ \hline 
$-3.9$ & $0.241$ & $0.329$ & $0.189$ & $0.241$ & $   -2.995   $ & $0.1$ & $0.77$ & $0.031$ & $0.1$ & $   -3.722  $\\ \hline 
$-3.95$ & $0.246$ & $0.302$ & $0.207$ & $0.246$ & $   -2.997   $ & $0.095$ & $0.781$ & $0.029$ & $0.095$ & $   -3.784   $\\ \hline 
$-4.05$ &  &  & &  &  & $0.088$ & $0.801$ & $0.024$ & $0.088$ & $   -3.904   $\\ \hline 
$-8.1$   &  &  & &  &  & $0.016$ & $0.967$ & $0.001$ & $0.016$ & $   -8.088   $\\ \hline 
\end{tabular}
\caption{Here we report the numerical solutions of equations $f_j=0$, $j=1,2,3,4,5$, associated to the two branches (a) and (b) raising from the saddle point at $\eta=\eta^2_{crit} = - 3.5836$. \ We can see that the solutions associated to the branch (b) are going to be fully localized on just one well (in this case it is the well with center at $\x_{(0,1)}$.}
\label{tabella2}
\end{center}
\end {table}

\begin{figure}
\begin{center}
\includegraphics[height=6cm,width=6cm]{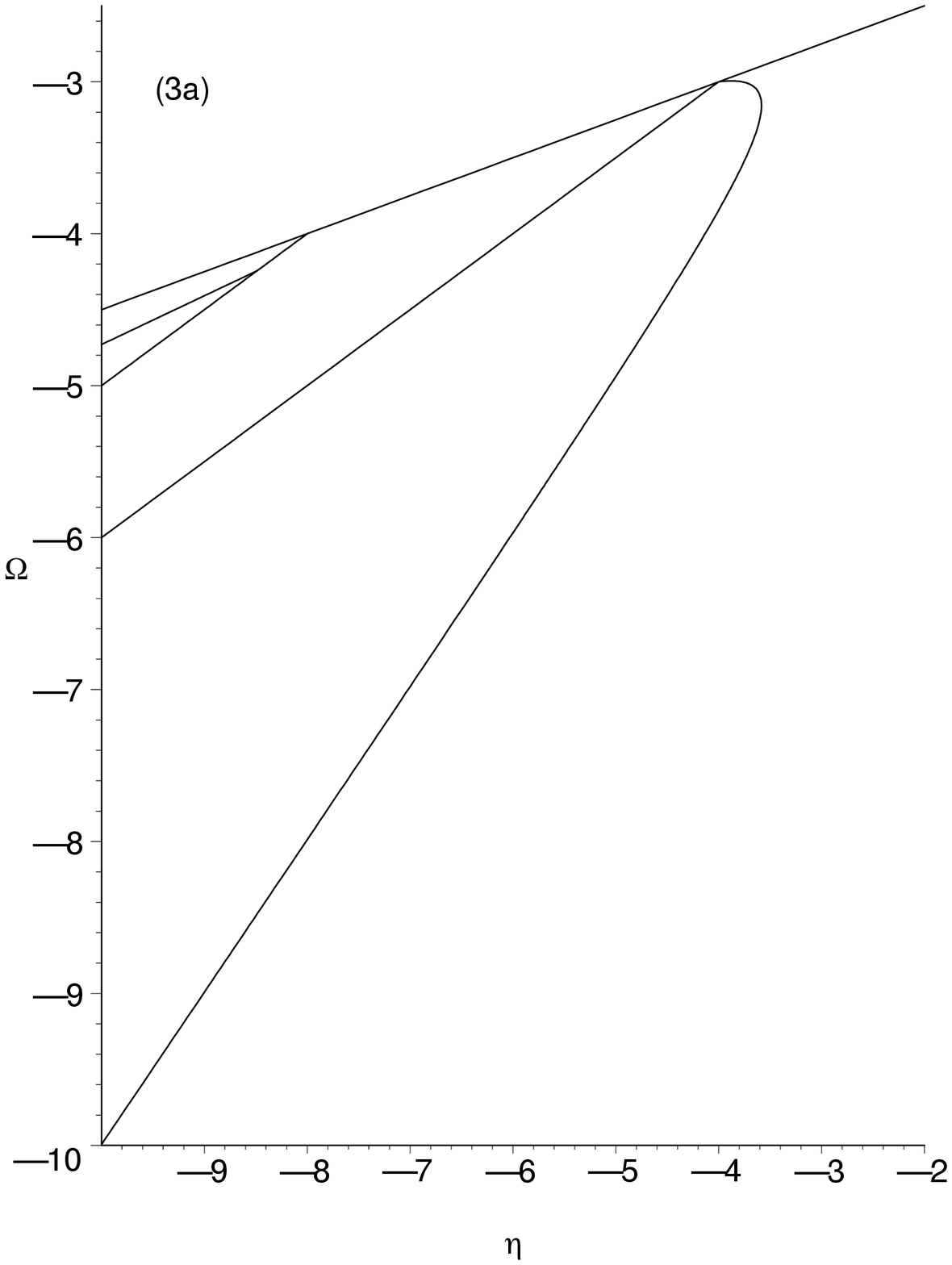}
\includegraphics[height=6cm,width=6cm]{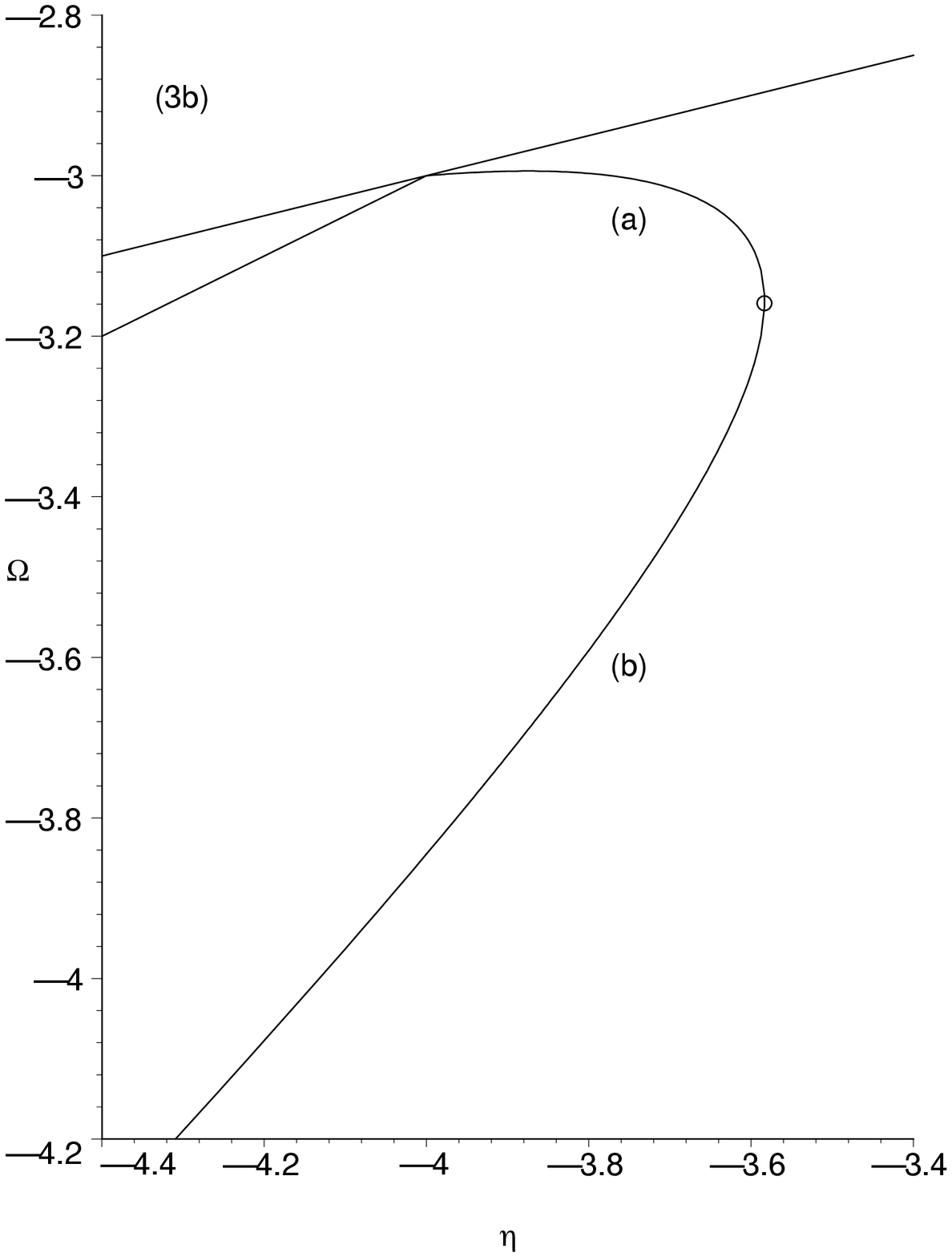}
\caption{In the left panel (3a) we plot the graph of $\Omega$ as function of $\eta$, and we observe $3$ bifurcation points and $1$ saddle point. \ In the right panel (3b) we zoom the picture around the saddle point (denoted by the circle point) and we name the two branches (a) and (b). \ On the branch (b) the wavefunction is going to be fully localized on just one well as $|\eta|$ increases.}
\label {Fig2}
\end{center}
\end{figure}

\begin{figure}
\begin{center}
\includegraphics[height=6cm,width=6cm]{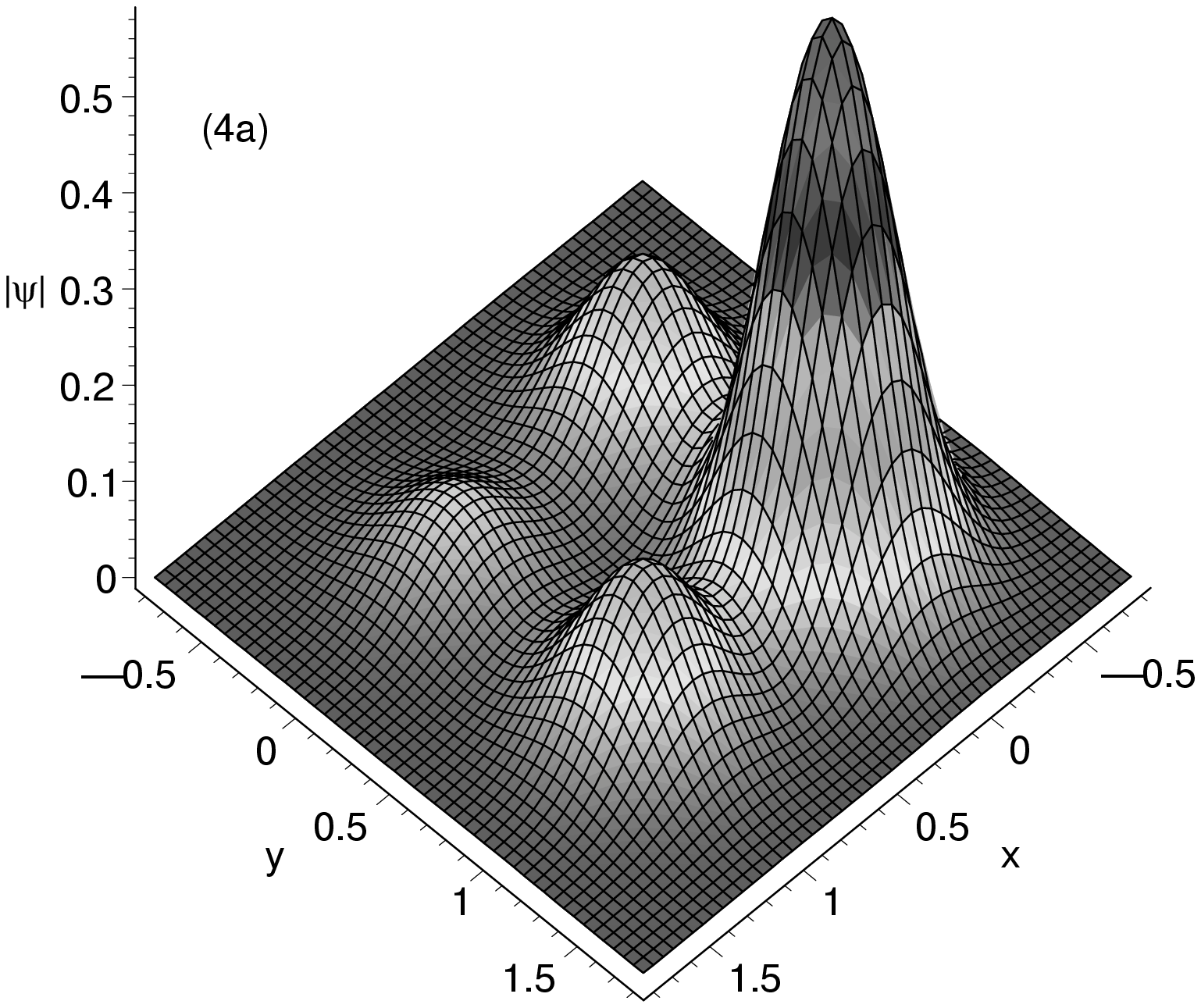}
\includegraphics[height=6cm,width=6cm]{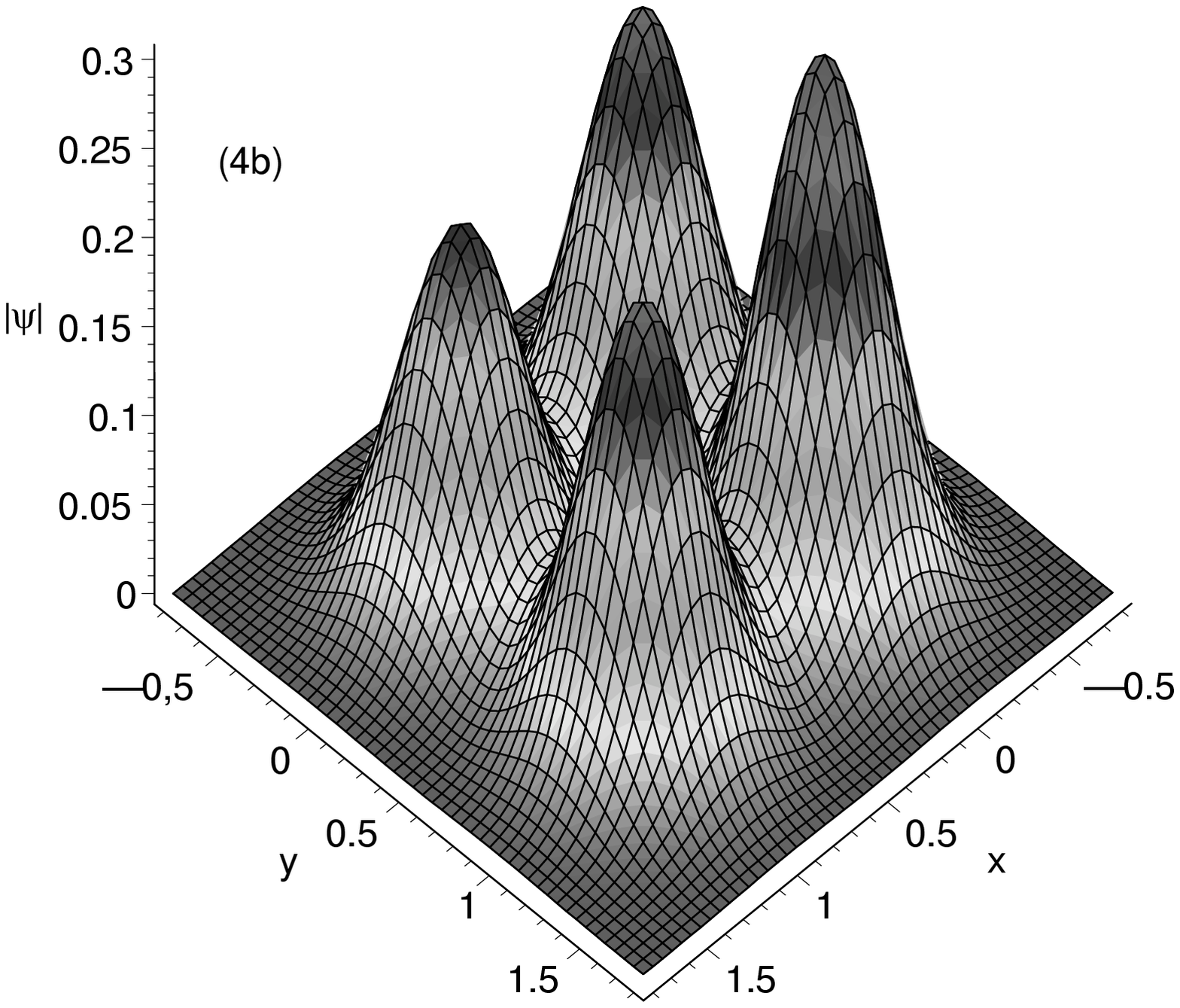}
\includegraphics[height=6cm,width=6cm]{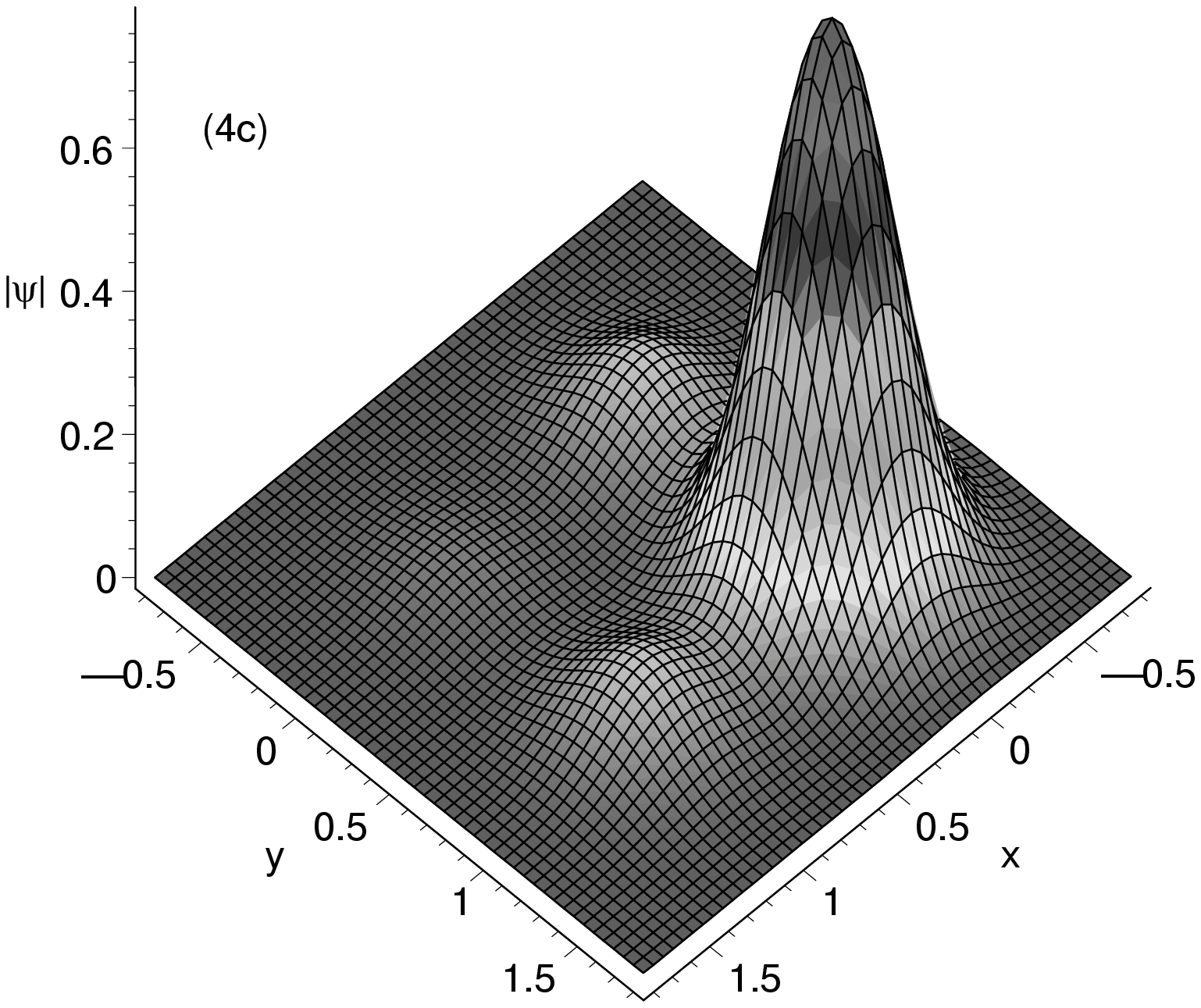}
\caption{Pictures of the absolute value of the asymmetrical stationary wavefunction associated to the branch (b) for $\eta =\eta_{crit}^2$ (panel (4a)) and $\eta =-3.95$; in panel (4b) we plot the absolute value of the function corresponding to a solution in the branch (a); in panel (4c) we plot the absolute value of the function corresponding to a solution in the branch (b). \ At $\eta = \eta_{crit}^2$ the stationary solution is quite localized on just one well and the wavefunction corresponding to the branch (b) is going to be localized on the well around $\x_{(0,1)}$ when $|\eta |$ increases; on the other side the wavefunction corresponding to the branch (a) is going to be delocalized between the four wells as $\eta$ approaches $\eta =-4$.}
\label {Fig3}
\end{center}
\end{figure}

\subsection {Three-dimensional model}

 By means of arguments similar to the discussed in \S \ref {M2} for two-dimensional models it turns out that bifurcations of the continuation of the symmetric solution $v_1$ in (\ref {F23Bis}) of (\ref {F19Bis}) are such that $q_j >0$ and $\theta_j =\theta_\ell$ for any $j,\ell =1,2,\ldots , 8$. \ That is, in order to find the bifurcation from the ground state solution we can restrict ourselves to study the following system of equations
 \be
\left \{ 
\begin {array}{lcl}
- \frac {1}{\sqrt{q_1}} \left ( \sqrt {q_2} + \sqrt {q_3} + \sqrt {q_5} \right ) + \eta q_1^\sigma &=& \Omega \\
- \frac {1}{\sqrt{q_2}} \left ( \sqrt {q_1} + \sqrt {q_4} + \sqrt {q_6} \right ) + \eta q_2^\sigma &=& \Omega \\
- \frac {1}{\sqrt{q_3}} \left ( \sqrt {q_1} + \sqrt {q_4} + \sqrt {q_7} \right ) + \eta q_3^\sigma &=& \Omega \\
- \frac {1}{\sqrt{q_4}} \left ( \sqrt {q_2} + \sqrt {q_3} + \sqrt {q_8} \right ) + \eta q_4^\sigma &=& \Omega \\
- \frac {1}{\sqrt{q_5}} \left ( \sqrt {q_1} + \sqrt {q_6} + \sqrt {q_7} \right ) + \eta q_5^\sigma &=& \Omega \\
- \frac {1}{\sqrt{q_6}} \left ( \sqrt {q_2} + \sqrt {q_5} + \sqrt {q_8} \right ) + \eta q_6^\sigma &=& \Omega \\
- \frac {1}{\sqrt{q_7}} \left ( \sqrt {q_3} + \sqrt {q_5} + \sqrt {q_8} \right ) + \eta q_7^\sigma &=& \Omega \\
- \frac {1}{\sqrt{q_8}} \left ( \sqrt {q_4} + \sqrt {q_6} + \sqrt {q_7} \right ) + \eta q_8^\sigma &=& \Omega \\
\end {array}
\right.
\label {F25}
\ee
which always has a symmetric solution
\be
q_j = \frac 18 , \ j=1, \ldots , 8 \ \mbox { with } \ \Omega = \eta 8^{-\sigma} -3 \, . \label {F26}
\ee 
As in the previous cases in dimension 1 and 2, asymmetrical solutions, if there exists, ar degenerate because of the invariance of the model with respect to several transformations; in particular our model is invariant with respect to the 48 transformations of the achiral octahedral symmetric group isomorphic to $S_4 \times C_2$. \ Among the symmetric and partially symmetric solutions coming from the solution (\ref {F26}) by bifurcation the first one we can observe are the two families of partially symmetric solutions such that
\be
q_2 = q_3 = q_5 \ \mbox { and } \ q_4 =q_6 =q_7 \label {F27}
\ee
and such that 
\be
q_1 =q_2 = q_7 = q_8 \, , \ q_5 =q_6 \ \mbox { and } \ q_3 = q_4 \, . \label {F28}
\ee

In particular (see Figure \ref {Fig4}):

\begin {enumerate}

\item [1.] At $\eta = \eta^3_{crit,1} =  -5.0116$ a saddle point occurs and the new stationary solutions have asymmetrical wavefunctions satisfying (\ref {F27}), and where the wavefunction corresponding to the branch (b) of Fig. \ref {Fig4} is localized on just one single well (see table \ref {tabella3}).

\item [2.] At $\eta = \eta^3_{crit,2} =  -7.1672$ a saddle point occurs and the new stationary solutions have asymmetrical wavefunctions satisfying (\ref {F28}), and where the wavefunction corresponding to the branch (d) of Fig. \ref {Fig4} is localized on a couple of adjacent  wells (see table \ref {tabella4}).

\item [3.] At $\eta =-8$ the solution (\ref {F26}) bifurcates in four branches; two of them are the branches (a) and (c) connected with the two saddle points previously discussed.

\item [4.] Solution (\ref {F26}) bifurcates at the value $\eta =-16$ and $\eta =-24$, too. \ The bifurcation point at $\eta =-16$ corresponds to 4 different branches, the bifurcation point at $\eta =-24$ corresponds to 2 different branches.

\end {enumerate}

\begin{figure}
\begin{center}
\includegraphics[height=6cm,width=6cm]{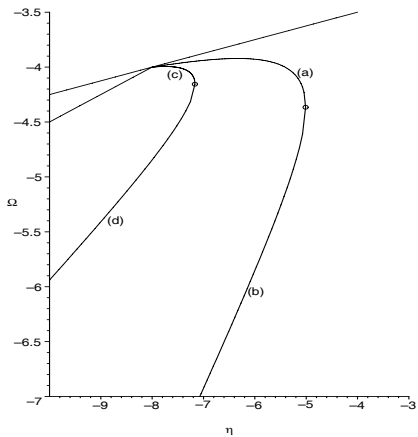}
\caption{We plot the graph of $\Omega$ as function of $\eta$, and we observe $2$ bifurcation points and $2$ saddle point. \ On the branches (b) the wavefunction is going to be fully localized on just one well as $|\eta|$ increases.}
\label {Fig4}
\end{center}
\end{figure}

\begin{table}
\begin{center}
\begin{tabular}{|l|c|c|c|c|c|c|c|c|c|c|} \hline
\multicolumn{1}{|c|}{} &  \multicolumn{5}{|c|}{Branch (a)} &  \multicolumn{5}{|c|}{Branch (b)}\\ \hline
$\eta $ 			& $q_1$ & $q_2$ & $q_4$ & $q_8$ & $\Omega$ & $q_1$ & $q_2$ & $q_4$ & $q_8$ & $\Omega$  \\ \hline  
$-5.0116$ &   &  & &  &  & $0.659$ & $0.083$ & $0.027$ & $0.0129$ & $   -4.3659   $\\ \hline 
$-5.1$ & $0.569$ & $0.098$ & $0.038$ & $0.021$ & $   -4.151   $ & $0.739$ & $0.067$ & $0.017$ & $0.007$ & $   -4.674   $\\ \hline 
$-6$ & $0.355$ & $0.128$ & $0.071$ & $0.048$ & $   -3.928   $ & $0.873$ & $0.037$ & $0.005$ & $0.001$ & $   -5.853   $\\ \hline 
$-7$ & $0.240$ & $0.136$ & $0.094$ & $0.071$ & $   -3.939   $& $0.919$ & $0.024$ & $0.002$ & $0.0004$ & $   -6.924   $\\ \hline 
$-7.9$ & $0.150$ & $0.131$ & $0.117$ & $0.105$ & $   -3.993   $ & $0.941$ & $0.018$ & $0.013$ & $0.0002$ & $   -7.852  $\\ \hline 
$-8.1$ &  & & &  &  & $0.945$ & $0.017$ & $0.001$ & $0.0002$ & $   -8.056   $\\ \hline 
$-10$ &  &  & &  &  & $0.957$ & $0.014$ & $0.001$ & $0.0001$ & $   -8.970   $\\ \hline 
\end{tabular}
\caption{Here we report the numerical solutions of equations (\ref {F25}) under conditions (\ref {F27}) associated to the two branches (a) and (b) raising from the saddle point at $\eta=\eta^3_{crit,1} = - 5.0116$. \ We can see that the solutions associated to the branch (b) are going to be fully localized on just one well.}
\label{tabella3}
\end{center}
\end {table}

\begin{table}
\begin{center}
\begin{tabular}{|l|c|c|c|c|c|c|c|c|} \hline
\multicolumn{1}{|c|}{} &  \multicolumn{4}{|c|}{Branch (c)} &  \multicolumn{4}{|c|}{Branch (d)}\\ \hline
$\eta $ 			& $q_1$ & $q_3$ & $q_5$ & $\Omega$ & $q_1$ & $q_3$ & $q_5$ & $\Omega$  \\ \hline  
$-7.1672$ &  &  &  &  & $0.085$ & $0.041$ & $0.290$  & $   -4.156   $\\ \hline 
$-7.2$ & $0.093$ & $0.050$  & $0.264$ & $-4.087$ & $0.076$ & $0.034$ & $0.314$  & $   -4.246   $\\ \hline 
$-7.5$ & $0.111$ & $0.074$ & $0.203$  & $-4.003$ & $0.059$ & $0.021$ & $0.362$  & $   -4.521   $\\ \hline 
$-7.9$ & $0.123$ & $0.103$  & $0.150$ & $-3.997$ & $0.048$ & $0.014$ & $0.391$  & $   -4.784   $\\ \hline 
$-8.1$ &  &  &  &  & $0.044$ & $0.012$ & $0.400$  & $   -4.904   $\\ \hline 
$-9$ &  &  &  &  & $0.032$ & $0.007$ & $0.429$  & $   -5.409   $\\ \hline 
\end{tabular}
\caption{Here we report the numerical solutions of equations (\ref {F25}) under conditions (\ref {F28}) associated to the two branches (c) and (d) raising from the saddle point at $\eta=\eta^3_{crit,2} = -7.1672$. \ We can see that the solutions associated to the branch (d) are going to be fully localized on a couple of wells.}
\label{tabella4}
\end{center}
\end {table}

\subsection {Conclusion} As we can see in the the previous pictures and tables, there exists a fundamental difference between the one-dimensional model and the two- and three-dimensional models: the appearance of saddle points associated to branch of stationary solutions localized on a single well. \ This fact is the basic argument for the explanation of the phase transition from superfluidity phase to insulator phase. \ Indeed, in presence of stationary solutions associated to the ground state and localized in just one well we expect that the typical beating motion in symmetric potential does not work and thus the motion of the particle of the condensate between adjacent wells is forbidden. \ Since in dimension 1 the asymmetrical state becomes gradually localized on just one well when the nonlinear strength parameter increases, then the phase transition is quite slow. \ In dimension 2 and 3 we have the opposite situation, the asymmetrical ground states localized on just one well suddenly appear with the saddle points and then the phase transition is expected to be very sharp.

\begin{thebibliography}{99}

\bibitem {BS} Bambusi D., and Sacchetti A., {Exponential times in the one-dimensional Gross--Petaevskii equation with multiple well potential}, Commun. Math. Phys. 2007.

\bibitem {B1} Bloch I., {\it Ultracold quantum gases in optical lattices}, Nature Physics 1, 23-30 (2005).

\bibitem {CW} Cazenave T., and Weissler F.B., {\it The Cauchy problem for the nonlinear Schr\"odinger equation in $H^1$}, 
Manuscripta Math. {\bf 61}, 477- 494 (1988).

\bibitem {C} Cazenave T., {\it Semilinear Schr\"'odinger equations}, (AMS:2003).

\bibitem {F} Fisher M.P.A., Weichman P.B., Grinstein G. and Fisher D.S., {\it Boson localization and the superfluid-insulator transition}, Phys. Rev.
B 40, 546-570 (1989).

\bibitem {FS} Fukuizumi R. and Sacchetti A., {\it Bifurcation and Stability for Nonlinear Schr\"odinger Equations with DoubleWell 
Potential in the Semiclassical Limit}, J. Stat. Phys. {\bf 145}, 1546-1594 (2011).

\bibitem {G} Greiner M., Mandel O., Esslinger T., H\"ansch T.H.  and Bloch I., {\it Quantum phase transition from a superfluid to a Mott insulator in a gas of ultracold atoms}, Nature
415, 39-44 (2002).

\bibitem {H} Helffer B., {\it Semi-classical analysis for the Schr\"odinger operator and applications}, Lecture Notes in Mathematics 
1336 (Springer-Verlag, 1988).

\bibitem {KKSW} E.W.Kirr, P.G.Kevrekidis, E.Shlizerman and M.I.Weinstein, {\it Symmetry-breaking bifurcation in nonlinear Schr\"odinger/Gross-Pitaevskii equations}, SIAM J. Math. Anal. {\bf   40}, 566-604 (2008).

\bibitem {Kohler} T.K\"ohler, {\it Three-Body Problem in a Dilute Bose-Einstein Condensate}, Phys. Rev. Lett. {\bf 89}, 210404 (2002).

\bibitem {PitStr} L.Pitaevskii, and S.Stringari, {\it Bose-Einstein condensation}, (Claredon Press: Oxford 2003). 

\bibitem {S} Sacchetti A., {\it Nonlinear double well Schr\"odinger equations in the semiclassical limit}, J. Stat. Phys. 
{\bf 119}, 1347-1382 (2005).

\bibitem {Sacchetti2} Sacchetti A., {\it Universal critical power for nonlinear Schr\"odinger equations with a symmetric double well potential}, Phys. Rev. Lett. {\bf 103}, 194101 (2009).

\end {thebibliography}

\end {document}